\documentclass[nofootinbib]{article}

\usepackage{hyperref}
\usepackage{amsfonts,amssymb,amsmath,exscale,bbm}
\usepackage{footnpag} % Footnotes start with 1 at each page
\usepackage[all,knot]{xy}  % xy diagrams
\usepackage{color}
\usepackage{graphicx}
\usepackage{epsfig}
\usepackage{subfig}
\usepackage{enumerate}
\usepackage{stmaryrd}

% formatting
\setlength{\topmargin}{0pt}
\setlength{\textheight}{8in}
\setlength{\textwidth}{6in}
\setlength{\oddsidemargin}{.25in}
\setlength{\evensidemargin}{0pt}
\setlength{\baselineskip}{1ex}

% environments

\usepackage{amsmath}
\usepackage{amsfonts}
\usepackage{graphicx}
\usepackage{psfrag}
\usepackage{array}
\usepackage{bbm}
\usepackage{hyperref}
\usepackage{amsthm}
\theoremstyle{definition}
\newtheorem{definition}{Definition}
\newtheorem{theorem}{Theorem}

\newtheorem{proposition}{Proposition}

\oddsidemargin 0.0cm
\evensidemargin 0.0cm
\topmargin 0cm
\textwidth 16.5cm
\setlength{\textheight}{22cm}

% Calligraphic letters
%----------------------
\newcommand{\cA}{{\mathcal A}}
\newcommand{\cB}{{\mathcal B}}

\newcommand{\cF}{{\mathcal F}}
\newcommand{\cG}{{\mathcal G}}
\newcommand{\cH}{{\mathcal H}}

\newcommand{\cM}{{\mathcal M}}

\newcommand{\cR}{{\mathcal R}}
\newcommand{\cS}{{\mathcal S}}
\newcommand{\cT}{{\mathcal T}}

\newcommand{\cZ}{{\mathcal Z}}

\newcommand{\Inv}{{\mathrm{Inv}}}

% Inverse element

\def\inv{{\mbox{\tiny -1}}}

% fancy letter

\newcommand\beq{\begin{equation}}
\newcommand\eeq{\end{equation}}
\newcommand{\be}{\begin{equation}}
\newcommand{\ee}{\end{equation}}
\newcommand{\bes}{\begin{eqnarray}}
\newcommand{\ees}{\end{eqnarray}}

\def\vphi{{\varphi}}
\def\vphib{\overline{{\varphi}}}

\newcommand{\one}{\mbox{$1 \hspace{-1.0mm}  {\bf l}$}}

      \def\nn{{\nonumber}}

\newcommand{\U}{\mathrm{U}}

\def\extd{\mathrm {d}}
\newcommand{\e}{\epsilon}

% operations

%\newcommand\tensor{\otimes}

\newcommand\acts\triangleright

% lists
\newcounter{letter} \newcounter{numeral} \newcounter{Numeral}

% other macros
%\newcommand\arr{\longrightarrow}

\def\vphi{\varphi}

\def\e{\mbox{e}}
\def\extd{\mathrm {d}}

% functions

% theorems and proofs

\newtheorem{theo}{Theorem}%[section]
\newtheorem{lemma}[theo]{Lemma}

\begin{document}

% title page and abstract %

%!TEX root = ../pphi.tex
\begin{titlepage}
\begin{flushright}
Lpt-Orsay-12-89, AEI-2012-079\\
\end{flushright}

\vspace{20pt}

\begin{center}

{\Large\bf Renormalization of Tensorial Group Field Theories:} \\
\medskip
{\large \bf Abelian $U(1)$ Models in Four Dimensions}
\vspace{15pt}

{\large Sylvain Carrozza$^{a,b}$, Daniele Oriti$^{b}$ and Vincent Rivasseau$^{a} $}

\vspace{15pt}

$^{a}${\sl Laboratoire de Physique Th\'{e}orique, CNRS UMR 8627,\\
 Universit\'{e} Paris XI, F-91405 Orsay Cedex, France, EU\\
}
\vspace{5pt}

$^{b}${\sl Max Planck Institute for Gravitational Physics,\\
Albert Einstein Institute, Am M\"uhlenberg 1, 14476 Golm, Germany, EU\\
}
\vspace{5pt}

{\sl Emails:   sylvain.carrozza@aei.mpg.de,  daniele.oriti@aei.mpg.de,  rivass@th.u-psud.fr
}

\vspace{10pt}

\begin{abstract}
We tackle the issue of renormalizability for Tensorial Group Field
Theories (TGFT) including gauge invariance conditions, with the rigorous tool
of multi-scale analysis, to prepare the ground for applications
to quantum gravity models.  In the process, we define the appropriate
generalization of some key QFT notions, including:
connectedness, locality and contraction of (high) subgraphs. We also
define a new notion of Wick ordering, corresponding to the subtraction
of (maximal) melonic tadpoles. We then consider the simplest examples
of dynamical 4-dimensional TGFT with gauge invariance conditions
for the Abelian $U(1)$ case. We prove that they are
super-renormalizable for any polynomial interaction.
\end{abstract}

\end{center}

\noindent  Pacs numbers:  11.10.Gh, 04.60.-m
\\
\noindent  Key words: Renormalization, group field theory, tensor models,
quantum gravity, lattice gauge theory. 

\setcounter{footnote}{0}

\end{titlepage}

% main body %

\section*{Introduction}

A complete theory of quantum gravity and spacetime should be background independent. It should not assume a priori any geometric background for the definition of its fundamental degrees of freedom or dynamical equations. This has been a guiding principle in canonical Loop Quantum Gravity \cite{LQG} and simplicial quantum gravity  \cite{DT, qRC}, but is also a necessary feature of any more fundamental formulation of string theory \cite{ST}. More radically, one would like independence from any background topological spacetime structure, hence a sum over topologies \cite{topology}. This is realized in the simpler context of matrix models for 2d quantum gravity \cite{MM}. Whether one intends quantum gravity as a theory of quantum geometry or of quantum spacetime {\it tout court}, the usual spacetime structures (a smooth metric field plus possibly the spacetime manifold itself) should be reconstructed from more fundamental quantum degrees of freedom of a different nature. The candidates for such fundamental pre-geometric degrees of freedom differ from one approach to another, but several arguments (e.g. suggested by the thermodynamical properties of black holes \cite{sorkin}) support the idea that they have to be of a {\it discrete} nature. Discrete building blocks of a quantum spacetime are used in simplicial quantum gravity and matrix models. They are also found as a result of quantization even in a priori continuum approaches like LQG, in the form of spin networks and spin foams \cite{LQG, SF, zakopane}. 

\

\noindent Tensor models and group field theories \cite{GFT1,GFT2,tensorReview, GFT3, vincentTensor} (which we collectively label {\it tensorial group field theories} (TGFT) in this paper) are a fast growing approach with very promising features.
They incorporate many of the insights revealed by the above approaches. They rely on a fully background independent formalism in which quantum, discrete, pre-geometric building blocks are used to (hopefully) generate a quantum spacetime that is dynamical in both geometry and topology. Indeed, TGFTs are a generalization of matrix models. Rank-$d$ tensors (with $d>2)$ are the basic dynamical variables. Stranded diagrams dual to $d$-complexes are generated as Feynman diagrams of the theory. 
The tensor represents a ($d-1$)-simplex; the indices refer to its ($d-2$)-faces, and the pairing of tensor indices in the interaction represents the gluing of  several ($d-1$)-simplices to form a $d$-dimensional polyhedron (a $d$-simplex in the simplest models). This peculiar {\it combinatorial non-locality} of the interactions is a defining feature of the models.

\

\noindent In the simplest, purely combinatorial models of this type, referred to as {\it tensor models} and first introduced in the early 90's \cite{tensor}, the indices of the tensors take value in finite sets of dimension $N$. More structure to quantum states, action and dynamical amplitudes is the result of endowing the tensors with more interesting domain spaces. Proper TGFTs are obtained when these are chosen to be Lie group manifolds \cite{mikecarlo} or their dual Lie algebras \cite{aristidedaniele}, while maintaining the combinatorial structure of the interactions. The first examples of such richer models were introduced as a quantization of discrete BF theories \cite{boulatov,ooguri}, and were later refined to give a candidate quantization of 4d gravity in the context of spin foam models \cite{DPFKR}.  

\

\noindent When appropriate data are added to the tensorial field and to its action, TGFTs become in fact a way to define the dynamics of kinematical states of LQG. Boundary states of the theory assume the form of spin networks, and the Feynman amplitudes assume the form of spin foam models \cite{mikecarlo} (TGFTs become then a natural way to remove the dependence of the spin foam dynamics from a given cellular complex). In different variables, furthermore, the same amplitudes are expressed as simplicial gravity path integrals \cite{aristidedaniele}, as used in simplicial quantum gravity approaches. Beyond the relation with LQG and simplicial gravity, the group-theoretic data are crucial. On one hand they allow the use of mathematical tools otherwise unavailable (e.g. Peter-Weyl decomposition and recoupling theory, non-commutative Fourier transforms,  etc). On the other hand they endow the TGFT field, action and amplitudes with a much more transparent geometric interpretation. This is also a key for extracting effective continuum physics from the formalism. TGFTs allow then a new point of view on the dynamics of quantum spacetime as described in these approaches, resting on a {\it bona fide} quantum field theory framework. In this context, models for 4d quantum gravity have been developed, the most interesting ones being found in \cite{EPRL, BO-BC, BO-Immirzi}.

\

\noindent The field theory setting is crucial to addressing issues arising when a large number of pre-geometric degrees of freedom are involved, in particular to explore the continuum limit of TGFT models. Continuum spacetime and geometry have been suggested \cite{GFTfluid, lorenzoGFT, vincentTensor} to arise, in the TGFT context, through a phase transition (dubbed {\it geometrogenesis}, following \cite{graphity}), as happens in matrix models. The study of phase transitions in TGFTs, obviously, is best tackled using field theoretic tools, just as the analysis of symmetries \cite{joseph, GFTdiffeos, virasoro}, collective effects and effective dynamics, for example via mean field techniques \cite{danielelorenzo, danieleflorianetera,effHamilt}, or simplified models \cite{gfc}. Related work on the continuum limit of spin foam models and discrete gravity path integrals, from a lattice gauge theory perspective, is being carried out by Dittrich and collaborators \cite{bianca}.

\

\noindent Among the relevant quantum field theory tools, the renormalization group plays a pivotal role. The renormalization of TGFTs has been a subject of intense activity in recent years. It started with (single scale) power counting theorems for the Feynman amplitudes of tensor models and topological TGFTs \cite{GFTrenorm,lin,valentinmatteo,scaling}\footnote{Our understanding of divergences in TGFT models of 4d gravity remains still very limited \cite{divergences4dGFT}.}. 

\

\noindent
But ultimately renormalization of TGFTs requires a rather precise control of their combinatorial structure. For this, we can now rely on two crucial recent results. The first one is the introduction  of {\it colors} \cite{Gurau:2009tw} labelling a multiplet of TGFT fields. Colored graphs can be shown to encode the topology of general $d$-dimensional complexes \cite{crystallization}. Colors can be equivalently understood as labelling the ordered arguments of a \emph{single} un-symmetric tensorial field \cite{uncoloring}. Random un-symmetric tensorial fields have been found to have natural polynomial interactions based on  $U(N)^{\otimes d}$ invariance (where $N$ is again the size of the tensor) \cite{universality}. These interactions,
obtained by contractions of indices of same the position between the field and its complex conjugate are 
the ones considered in this paper. 

\

\noindent
The second key result is the TGFT analogue of the $1/N$ expansion of matrix models identified in \cite{large-N}. The TGFT perturbative expansion at large $N$
is dominated in any dimension $d$ by a particular class 
of triangulations of the $d$-sphere. They were further characterized and called {\it melons} in \cite{critical}. 
This $1/N$ expansion has allowed the first \emph{proofs} 
that a phase transition indeed occurs in simple tensor models \cite{critical, critical2}, and is a key tool 
in the analysis we perform in this paper.

\

\noindent
In the context of renormalization, one may distinguish two types of models: {\it ultralocal} ones, such as those apt for the description of topological BF theory, characterized by trivial kinetic operators (delta functions or simple projectors), and {\it dynamical} ones (first considered, with different motivations, in \cite{generalisedGFT}) with kinetic operators given by differential operators such as the Laplacian on the group manifold. In the first case, the models are non-trivial only thanks to the specific symmetries and other conditions imposed on the field and to the peculiar non-local nature of the interactions. In the second case, the propagator allows one to define scales and to launch a proper renormalization group flow. It is unclear whether
ultralocal models are rich enough to give a proper quantization of 4d gravity. Indeed there are indications \cite{josephvalentin} that even starting from ultralocal models one falls into dynamical models as soon as radiative corrections are considered, since the kinetic terms with Laplacian operators are required as counter-terms. Hence it is the second type of models that are considered in this paper.

\

\noindent
TGFTs are truly a new class of quantum field theories. They pose new challenges, in particular to renormalization, 
but offer also new promising features. The first examples of dynamical TGFTs {\it renormalizable} to all orders in perturbation theory have been identified \cite{tensor_4d, josephsamary,gelounlivine}. The most natural ones have been proved
asymptotically free \cite{josephsamary,gelounlivine,josephaf}. Asymptotic freedom, thanks to the wave function renormalization 
which is stronger in the tensorial context than in the scalar, vector or matrix case, may very well be a generic 
property of TGFTs. It makes them prime candidates for a geometrogenesis scenario which would be 
a kind of gravitational analog of quark confinement in QCD. 
Moreover TGFTs are accessible to rigorous \emph{constructive}
analysis in their dilute perturbative phase, through a constructive tool called the \emph{loop vertex expansion} \cite{Rivasseau:2007fr}.
This tool has been checked to apply to tensor models in \cite{Magnen:2009at} and to apply to positive even 
interactions of arbitrarily high order \cite{Rivasseau:2010ke}. TGFTs have therefore the potential for a non-perturbative and
rigorous analytic formulation that, to our knowledge, is yet lacking in some other approaches 
to quantum gravity. They also already include applications to domains of statistical 
physics such as dimers \cite{bonzomerbin} or spin glasses \cite{BGS} which are
quite far from the initial quantum gravity context.

\

\noindent
In spite of these recent successes, the renormalizable TGFTs analyzed so far do not yet have some additional symmetries which are needed for their Feynman amplitudes to be interpreted as discretized topological BF theories or (together with additional conditions) 4d quantum gravity. One needs to revise the renormalization tools introduced in \cite{tensor_4d} to include these symmetries, usually referred to also as closure conditions, and hereafter called the \emph{gauge invariance conditions}.
This is what we do in this work. More precisely, we tackle the issue of renormalizability for TGFT models with such geometric conditions with the rigorous tool of multi-scale analysis, to prepare the ground for future 
applications to gravity models.  Our main results are the following.

\

\noindent
We define the appropriate generalization of some of the notions of usual QFT which are key to the renormalization analysis, 
including:

\begin{itemize}

\item a new notion of {\it connectedness} for TGFT Feynman diagrams,

\item a new notion of (quasi-)locality, that we name {\it traciality}, adapted to the TGFT context; 

\item a new notion of {\it contraction of high subgraphs} (the ones that look local) \footnote{This contraction procedure can be also understood as a new coarse graining procedure for lattice gauge theory and discrete gravity, from the point of view of the Feynman amplitudes of the TGFT, seen as a discrete path integral (to be compared with the one used in \cite{bianca}).}.

\item a new notion of Wick ordering for general invariant interactions, which we name {\it melordering} (for {\it melonic Wick ordering}). It subtracts
their maximal melonic tadpoles, or \emph{melopoles}, which are all \emph{tracial}. Such melordering is a first step in the renormalization of any TGFT.
\end{itemize}

\noindent
We then consider the simplest examples of dynamical 4-dimensional TGFT with gauge invariance conditions 
for the Abelian $U(1)$ case. Their complex field depends on four $U(1)$ group elements (in configuration space). The propagator 
is the inverse of a Laplacian but with an added projection to represent $BF$-type gauge invariance conditions.
Interactions are given by arbitrary $U(N)^4$ invariant monomials.  

\

\noindent 
We perform the full multi-scale analysis of these models and we prove that they are 
{\it super-renormalizable} for any polynomial interaction of arbitrary order. Their only 
divergent diagrams are the melopoles. Hence melordering provides the renormalization. We
prove that the models with melordered interactions have a {\it finite renormalized series} at any order in perturbation theory. 
Therefore these $4$-dimensional models are the direct analogues in the tensor world of the $P(\phi)_2$ models of 
ordinary quantum field theory \cite{Simon}, which are also super-renormalizable for any polynomial interaction and in which Wick ordering provides
all the renormalization.

\

\noindent 
We conclude with an outlook into the non-Abelian case. It suggests that in $d=3$ a $SU(2)$-based model of the same type is {\it just renormalizable} with the 6-th order interactions considered in \cite{tensor_4d}.

 %!TEX root = ../pphi.tex
 
% def_models

\section{Definition of the models}

\subsection{Formal definition}

The class of theories we consider are tensorial group field theories of one single rank-$d$ complex tensorial field $\vphi(g_1 , \dots , g_d)$, whose arguments $g_\ell$ take value in a Lie group
$G$. In the spirit of \cite{universality}, the tensorial nature of the field $\vphi$ provides us with a natural notion of \textit{locality}, encoded by the fact that the interaction part of the action is a sum of \textit{tensor invariants}, as is the case in matrix models. Such invariants are obtained by convolution of a set of fields $\vphi$ and $\overline{\vphi}$, in such a way that the $k$-th index of a field $\vphi$ is
always contracted with the $k$-th index of a conjugate field $\overline{\vphi}$. Indeed, the result of such a convolution is a field polynomial invariant under $U(N)^{\otimes d}$, where N is the cut-off on $U(1)$ representations (playing the role of momenta). They are canonically represented by closed \textit{$d$-colored graphs}, constructed as follows: each field $\vphi$
(resp. $\vphib$) is represented by a white (resp. black dot), and each contraction of a $k$-th index between two fields is pictured as a line with \textit{color} label $k$ linking the two relevant
dots (see Figure \ref{tensinv}). Connected such graphs are called \textit{$d$-bubbles}. As usual in field theory, we will assume that the interaction is a sum of invariants which can be represented as such connected graphs,
which we call \textit{connected tensor invariants}. Therefore, we define the interaction part of the action as
\beq
S(\vphi , \vphib) = \sum_{b \in \cB} t_b I_b (\vphi , \vphib)\,, 
\eeq
where $\cB$ is a finite set of $d$-bubbles, $I_b$ the connected tensor invariant labelled by $b$, and $t_b \in \mathbb{C}$\footnote{Restrictions on the set of allowed values for $\{ t_b \}$
are necessary if we want $S$ to satisfy conditions such as reality or positivity, but at this stage we keep the discussion as general as possible.}. 

\begin{figure}[h]
\begin{center}
\includegraphics[scale=0.7]{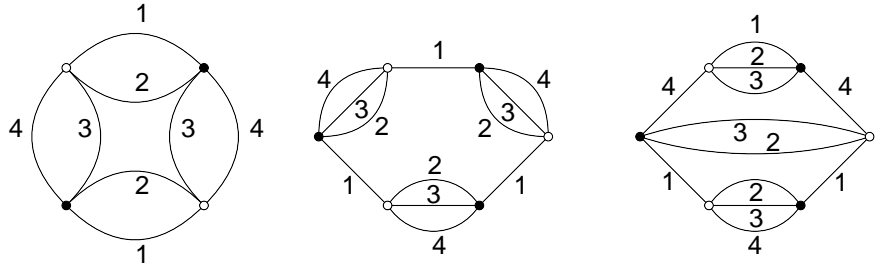}
\caption{Some connected tensor invariants in $d = 4$}
\label{tensinv}
\end{center}
\end{figure}
\

\noindent The kinetic part of the model is given by a Gaussian measure $\extd \mu_C (\vphi , \vphib)$, with covariance (propagator) $C$:
\beq
\int \extd \mu_C (\vphi , \vphib) \, \vphi(g_1 , \dots , g_d ) \vphib(g_1' , \dots , g_d' ) = C(g_1, \dots , g_d ; g_1' , \dots , g_d' )\,.
\eeq
The partition function is defined as 
\beq
\cZ = \int \extd \mu_C (\vphi , \vphib) \, \e^{- S(\vphi , \vphib )}.
\eeq
If $C$ itself is the kernel of a tensor invariant, as it is the case in simple i.i.d tensor models \cite{tensorReview, tensor}, or a projector, as it is the case in topological GFTs \cite{boulatov,ooguri,GFT1,GFT2,GFT3}, the model is called
\textit{ultralocal} and the usual field-theory notion of scale in terms of the spectrum of the covariance cannot be applied. In \cite{tensor_4d} the
first examples of renormalizable models were given, where the usual spacetime-based notion of scales is replaced by a more abstract notion, based on the
spectrum of $C$. Such a generalization is forced upon us by the background-independent nature of such models, at least if one wants to extend the scope of renormalization methods to such theories. Therefore, a covariance with a rich enough spectrum is necessary to the very definition of renormalizability, which depends in fact on the notion of scales. This was chosen in \cite{tensor_4d} to be $\widetilde{C} = \left( m^2 - \sum_{\ell = 1}^{d}
\Delta_\ell \right)^{-1}$, where $\Delta_\ell$ is the Laplace-Beltrami operator on $G$ acting on color-$\ell$ indices. Its kernel is an integral over a
Schwinger parameter $\alpha$ of products of heat kernels $K_\alpha$:
\beq
\widetilde{C}(g_1, \dots , g_d ; g_1' , \dots , g_d' ) =  \int_{0}^{+ \infty} \extd \alpha \, \e^{- \alpha m^2} \prod_{\ell = 1}^{d} K_{\alpha} (g_\ell g_\ell'^{\inv})\,.
\eeq
The parameter $\alpha$ is interpreted as a momentum scale, and can be sliced according to a geometric progression in order to perform a multi-scale analysis (see \cite{VincentBook}), as we do in the following.

\
If such a non-trivial propagator allows to deviate from ultralocality and hence address the question of renormalizability, it is still not satisfying from a discrete gravity (or lattice gauge theory) perspective. In TGFT models that aim at describing such theories, the TGFT fields have to satisfy some constraints which give them the interpretation of quantized $(d-1)$-simplices, and the Feynman amplitudes that of simplicial gravity path integrals or lattice gauge theory partition functions \cite{GFT1,GFT2,GFT3}. A common feature in any dimension $d$, is the so called
\textit{closure constraint}, which (in group representation) imposes invariance of the TGFT field under simultaneous (left) translations of its arguments:
\beq\label{gauge_inv1}
\forall h \in G \,, \qquad \vphi(h g_1, \dots , h g_d ) = \vphi(g_1, \dots g_d )\,.
\eeq
This is exactly the condition which, in the cellular complexes labelling TGFT amplitudes, allows to define a discrete (gravitational or Yang-Mills) connection, and gives them the general form of a lattice gauge theory amplitude\footnote{This is the reason why, despite the absence of gauge symmetry in the field theory sense, the invariance of the field (\ref{gauge_inv1}) is also referred to as gauge invariance condition: it is responsible for a lattice gauge symmetry at the level of each amplitude.}, when not that of a lattice gravity model, like in the Abelian case we will consider in the following.  In the present paper, it is our purpose to explore the consequences for renormalizability of the implementation of such a constraint. We therefore define
the new propagator $C$ as a group-averaged version of $\widetilde{C}$
\beq
C(g_1, \dots , g_d ; g_1' , \dots , g_d' ) = \int_{0}^{+ \infty} \extd \alpha \, \e^{- \alpha m^2} \int \extd h \prod_{\ell = 1}^{d} K_{\alpha} (g_\ell h g_\ell'^{\inv})\,,
\eeq
which is a way to ensure that only translation invariant degrees of freedom are propagated. This results in the Feynman amplitudes being written as integrals over discrete connections, i.e. as lattice gauge theory path integrals on the lattice given by the (dual of the) TGFT Feynman diagram. In Lie algebra representation, the same amplitude will then take the form of a $BF$-like simplicial path integral \cite{aristidedaniele}, while in representation space it will become a spin foam model \cite{mikecarlo}.

\subsection{Regularization and the question of renormalizability}

Models as defined in the previous sections are only formal, and plagued with divergencies. The two possible sources of divergencies are the vicinity of $0$ (UV) and the vicinity of $+
\infty$ (IR) for the Schwinger parameter $\alpha$ \footnote{We adopt a standard QFT terminology for the UV/IR distinction, adapted to the renormalization group flow; this should not be given any geometric interpretation at this stage. Notice that the corresponding nomenclature from the point of view of a {\it simplicial} gravity interpretation of the amplitudes is usually the opposite.}. In this paper we will discard IR divergencies, since the only explicit example we will work out will be defined on a compact group $G = \U(1)$, in
which case IR divergencies do not occur. Finally, as our main technical tool will be a multi-scale analysis, we choose a UV regulator compatible with a decomposition of the propagator into slices. The
latter goes as follows. We fix an arbitrary parameter $M > 1$, and decompose the integral over the Schwinger parameter $\alpha$ into slices $[ M^{ - 2 i} , M^{ - 2 (i - 1)}]$, where $i$ takes integer
values:
\bes
C_{0}(g_1, \dots , g_d ; g_1' , \dots , g_d' ) &=& \int_{1}^{+ \infty} \extd \alpha \, \e^{- \alpha m^2} \int \extd h \prod_{\ell = 1}^{d} K_{\alpha} (g_\ell h g_\ell'^{\inv})\,, \\
\forall i \geq 1\,, \qquad C_{i}(g_1, \dots , g_d ; g_1' , \dots , g_d' ) &=& \int_{M^{ - 2 i}}^{M^{ - 2(i-1)}} \extd \alpha \, \e^{- \alpha m^2} \int \extd h \prod_{\ell = 1}^{d} K_{\alpha} (g_\ell h
g_\ell'^{\inv})\,.
\ees  
The UV regulator $\rho$ is an upper bound in the sum over slices, defining the regularized propagator:
\beq  \label{decomposi}
C^{\rho} = \sum_{0 \le i \leq \rho} C_i \,.
\eeq 

\
The theory will be renormalizable if the UV regulator can be removed, and the resulting infinities absorbed in an appropriate finite set of local interactions. In the next section, we present the general form of the Feynman amplitudes obtained for this class of models, using the propagator and interactions (bubble invariants) which we have introduced above.

 %!TEX root = ../pphi.tex
 
% amplitudes

\section{Amplitudes}

\subsection{Feynman graphs and gauge symmetry}

The Feynman graphs of the theory are constructed from vertices or $d$-bubbles, 
supplemented by a set of lines (called of color $0$), representing propagators. 

The vertices are restricted to $U(N)^{\otimes D}$ connected tensor invariants, hence 
have a colored representation \cite{uncoloring}. Plugging this representation at every vertex, every Feynman 
graph of the theory has a unique underlying colored graph, which we also call its \textit{colored extension}.
Because of the complex nature of the $\vphi$ field, the color-$0$
lines in this colored extension (represented as dotted lines) must link black nodes to white nodes of the $d$-bubbles, each of these nodes being attached to at most $1$ line of color $0$. The underlying colored Feynman graphs are therefore all the $(d+1)$-colored graphs
with the restriction that only lines of color $0$ can be opened, i.e. external. An example is given in Figure \ref{coloredgraph}.

\begin{figure}[h]
\begin{center}
\includegraphics[scale=0.7]{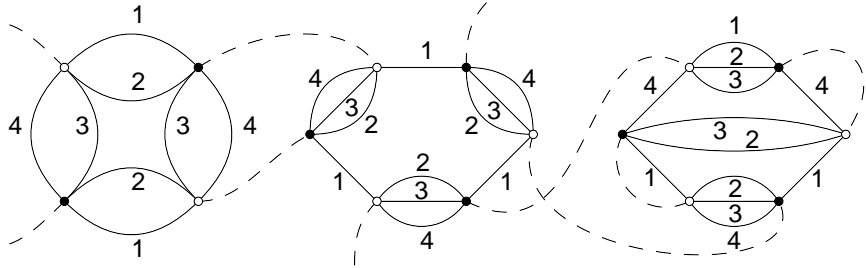}
\caption{A graph with three vertices, six (internal) lines, and four external legs}
\label{coloredgraph}
\end{center}
\end{figure}

In the following we will simply call \textit{graphs} the uncolored ones, their \textit{lines} being the lines of color $0$. 
We will also refer to the internal lines of the $d$-bubbles as the \textit{colored lines}.

\
At the formal level, the connected Schwinger functions of the theory are given by a sum over connected Feynman graphs:
\beq
\cS_N = \sum_{\cG \; \mathrm{connected}, N(\cG)= N} \frac{1}{s(\cG)} \left(\prod_{b \in \cB} (- t_b)^{n_b (\cG)}\right) \cA_\cG \,,
\eeq
where $N(\cG)$ is the number of external legs of a graph $\cG$, $n_b (\cG)$ the number of vertices of the type $b$, and $s(\cG)$ a symmetry factor. The Feynman rules for building $\cA_\cG$ are
straightforward: to each $d$-bubble 
of the type $b$ is associated an integral with respect to a measure given by the kernel of $I_b$, and to each color-$0$ line corresponds a propagator. Integrating the different heat kernels defining
the propagator, we can write it in a very 
similar way as a lattice gauge theory amplitude:
\begin{eqnarray}
\cA_\cG &=& \left[ \prod_{e \in L(\cG)} \int \extd \alpha_{e} \, e^{- m^2 \alpha_e} \int \extd h_e \right] 
\left( \prod_{f \in F (\cG)} K_{\alpha(f)}\left( \overrightarrow{\prod_{e \in \partial f}} {h_e}^{\epsilon_{ef}} \right) \right) \nn\\
&&\left( \prod_{f \in F_{ext}(\cG)} K_{\alpha(f)} \left( g_{s(f)}
\left[\overrightarrow{\prod_{e \in \partial f}} {h_e}^{\epsilon_{ef}}\right] g_{t(f)}^{\inv} \right) \right) \,.
\end{eqnarray}
In this formula, $\alpha(f) \equiv \sum_{e \in \partial f} \alpha_e$ is the sum of the Schwinger parameters appearing in the face $f$, and $\epsilon_{ef} = \pm 1$ is determined by the orientation of $e$
with respect to an arbitrary orientation of the faces. The faces are split into closed ($F$) and opened ones ($F_{ext}$), $g_{s(f)}$ and $g_{t(f)}$ denoting boundary variables in the latter
case, with functions $s$ and $t$ mapping open faces to their ``source" and ``target" boundary variables. 

\
This amplitude is invariant under a group action acting on the vertices (i.e. the $d$-bubbles). For any assignment of group elements $(g_v) \in G^{V(\cG)}$, the integrand of the amplitude is invariant
under:
\beq
h_e \mapsto g_{t(e)} h_e g_{s(e)}^{\inv}\,,
\eeq
where $t(e)$ (resp. $s(e)$) is the target (resp. source) vertex of an (oriented) edge $e$, and one of the two group elements is trivial for open lines. Because this is a symmetry of the integrand
itself, it can be gauge-fixed following the standard prescription of \cite{FLouapre}. When $\cG$ is connected, this amounts to set $h_e = \one$ for all lines $e$ in a maximal tree $\cT$. This
gauge symmetry is a very important feature of the models considered in this paper, that will require significant modifications of the multi-scale analysis of \cite{tensor_4d}.

\subsection{Multi-scale decomposition of the amplitudes}

Using the multi-scale decomposition \eqref{decomposi}, any graph is written as a sum over scale attributions
$\mu = \{ i_e\}$, where $i_e$ runs over all integers (smaller than $\rho$) for every line $e$.

\begin{eqnarray}
\cA_{\cG }  &=&  \sum_{\mu }\cA_{\cG , \mu} , \nn\\
\cA_{\cG , \mu}  &=& \left[ \prod_{e \in L(\cG)}  \int_{M^{ - 2 i}}^{M^{ - 2(i-1)}}  \extd \alpha_{e} \, e^{- m^2 \alpha_e} \int \extd h_e \right] 
\left( \prod_{f \in F (\cG)} K_{\alpha(f)}\left( \overrightarrow{\prod_{e \in \partial f}} {h_e}^{\epsilon_{ef}} \right) \right) \nn\\
&&\left( \prod_{f \in F_{ext} (\cG)} K_{\alpha(f)} \left( g_{s(f)}
\left[\overrightarrow{\prod_{e \in \partial f}} {h_e}^{\epsilon_{ef}}\right] g_{t(f)}^{\inv} \right) \right) \,.
\end{eqnarray}

The strategy of the multi-scale expansion is to replace the complicated expression for the propagators by a simpler bound
which captures their power-counting, and to integrate the variables $h_e$ without loosing any such power-counting, hence
any powers of the $M^{i}$ type. But to implement it and to perform the renormalization 
of this new type of models we have to revise some graph-theoretical notions and adapt them to our new context.

\subsection{Dipole moves and reduced graphs}

A central notion in colored tensor models and GFTs is that of dipole contractions. They were key to the discovery of the $1/N$-expansion \cite{large-N}, as well as to answering more specific
topological questions \cite{scaling, jimmy,francesco}, essentially because they are the counterparts of Pachner moves in the colored context \cite{crystallization}. The main appeal of these moves
is that they allow to reduce the combinatorial complexity of colored graphs, while retaining topological properties of their dual simplicial complexes. In the present paper however we will have a
somewhat different approach as we will use dipole contractions as a way to consistently delete faces, and implement a contraction scheme for quasi-local graphs. Topological considerations will
therefore be essentially irrelevant in what follows. For this reason, we will not distinguish \textit{degenerate} from \textit{non-degenerate}
dipoles, and will use the generic word \textit{dipole} for both. Finally, since lines of color $0$
are the only dynamical entities in our framework, we will also only consider dipoles of a special kind: those which have an internal color-$0$ line.
Their precise definition is the following:
\begin{definition}
Let $\cG$ be a graph, and $\cG_c$ its colored extension. For any integer $k$ such that $1 \leq k \leq d+1$, a $k$-dipole is a line of $\cG$ whose image in $\cG_c$ links two nodes $n$ and
$\overline{n}$ which are connected by exactly $k - 1$ additional colored lines (see Figure \ref{dipo}).
\end{definition}

\begin{figure}
\begin{center}
\includegraphics[scale=0.7]{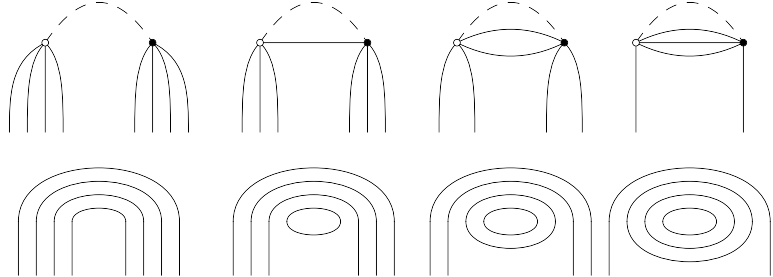}
\caption{$k$-dipoles from $k = 1$ (top, left) to $k = 4$ (top, right) in $d = 4$, and their faces (below)}
\label{dipo}
\end{center}
\end{figure}

With the terminology of previous works \cite{tensorReview}, a $k$-dipole of $\cG$ is nothing but a line whose image is internal to a $k$-dipole of $\cG_c$.
We now define the contraction operation.
\begin{definition}
Let $\cG$ be a graph, and $\cG_c$ its colored extension. The contraction of a $k$-dipole $d_k$ is an operation in $\cG_c$ that consists in:
\begin{enumerate}[(i)]
 \item deleting the two nodes $n$ and $\overline{n}$ linked by $d_k$, together with the $k$ lines that connect them;
 \item reconnecting the resulting $d - k$ open legs according to their colors.
\end{enumerate}
We call $\cG_c / d_k$ the resulting colored graph, and $\cG / d_k$ its pre-image.
\end{definition}

$1$-dipole contractions play a prominent role in colored tensor models an GFTs, because they implement the topological notion of connected sum of $d$-bubbles. This remains true in our context, the
relevant move being the contraction of a $1$-dipole which is not a tadpole, that is to say what is usually called a non-degenerate $1$-dipole. Interestingly, the contraction of a full set of such
$1$-dipoles is intimately related to the gauge-fixing procedure sketched before. In a connected graph $\cG$, a full contraction is obtained by successively contracting a maximal tree of lines $\cT$.
But we also know that in the Feynman amplitude of $\cG$, the group elements associated to this tree can be set to $\one$. Incidentally, the purely combinatorial notion of contraction of lines of $\cT$
is nothing but the result of trivial convolutions in the amplitude $\cA_\cG$. The only difference between $\cA_\cG$ and $\cA_{\cG/\cT}$, where $\cG / \cT$ denotes the fully contracted graph, is a set
of simple integrals with respect to Schwinger parameters, while their integrands have exactly the same structure. This observation will be crucial to the multi-scale analysis, and most of the
discussion will therefore focus on \textit{reduced graphs} $\cG / \cT$, with respect to a suitably chosen spanning tree.

\
All $k$-dipoles with $k > 1$ are necessarily tadpole lines, and contain $(k - 1)$ internal faces (see Figure \ref{dipo}). Among them, the $d$-dipoles will be the focus of special interest, and will be called \textit{melonic
tadpoles}, or simply \textit{melons}. In ultralocal tensor models and GFTs, these structures have been shown to govern the leading order, and henceforth the critical behaviour.

 %!TEX root = ../pphi.tex
 
% connectedness

\section{Connectedness and quasi-locality}

\subsection{Contraction of connected subgraphs}

We now give general definitions which are crucial to the implementation of a multi-scale analysis, and hence to the understanding of the precise structure of divergencies. 

Probably
the most important notion is that of \textit{quasi-local graphs}, that is \textit{connected subgraphs} which, from the point of view of their external legs, \textit{look} local. 
As a tensorial field
theory differs substantially from usual field theory, we need to reconsider in detail the notions of subgraph, connectedness, and contraction of subgraphs. 

A major role
will be played by the faces of the graph, which is where the curvature of the discrete connection introduced by the new gauge invariance condition is assigned. These faces are followed easily by drawing the 
\textit{colored extension} $\cG_c$ of the graph $\cG$ \cite{universality}. Faces (of fixed color $i$) of  $\cG$ are the alternating circuits of lines of color 0 and  $i$ in  $\cG_c$, and can be either closed 
(internal) or open (external). Rather than the usual incidence matrix $\epsilon_{ev}$ between lines and vertices of ordinary graph theory, it is the incidence matrix of lines and closed faces $\epsilon_{ef}$ in $\cG$ 
which plays the leading role in group field theory \cite{GFTrenorm, lin, tensor_4d}. To define this matrix one needs an orientation of both the lines and the faces. 
It is $+1$ if the face $f$ goes through line $e$ with the same orientation, $-1$ if the face $f$ goes through line $e$ with 
opposite orientation and 0 otherwise. The colored structure ensures absence of ``tadfaces", i.e. faces which pass several times through the same line.

\
We start with the notion of subgraph. In ordinary graph theory a subgraph of a graph $G$ is most conveniently defined as a {\emph{subset S of lines}} of $G$, so that a graph with $L$ lines has exactly $2^L$
subgraphs. Such a subset of lines is then completed canonically by adding the vertices attached to the lines and the external lines, also called ``legs". The latter are defined by first cutting in the middle all
lines of $G-S$. Legs of $S$ then correspond 
either to true legs of $G$ attached to vertices of $S$ or to half-lines of $G-S$ attached to the vertices of $S$. Finally ordinary connectedness of $H$ can be defined
through examining the ordinary incidence matrix $\epsilon_{ev}$ of $H$; the connected components of $H$  are the maximal factorized rectangular blocks of this matrix. Hence 
elementary connections between lines come from their common attached vertices.

Recalling that a tensorial graph $\cG$  has (0-colored) internal lines, external legs, ($d$-bubbles) vertices, and faces, 
the definition of a subgraph for TGFTs is a natural generalization of the ordinary definition.

\begin{definition}
A \textit{subgraph} $\cH$ of a graph $\cG$ is a subset of  lines of $\cG$, hence $\cG$ has exactly $2^{L(\cG)}$ subgraphs. 
$\cH$ is then completed by first adding the vertices that touch its lines. 
The faces closed in $\cG$ which  pass only through lines of $\cH$ form the set of \textit{internal} faces of $\cH$. The 
external faces of $\cH$ are the maximal open connected pieces of either open or closed faces of $\cG$ that pass through lines of $\cH$. 
Finally only the external legs touching the vertices of $\cH$ that contain external faces of $\cH$ are considered external legs of $\cH$.
\end{definition}
To understand the open faces of  $\cH$, the best is first to "cut", as in the ordinary case, all lines of $\cG \setminus \cH$ in the middle. This breaks the closed or open faces of $\cG$ into pieces.
Those remaining open pieces that belong entirely to $\cH$ are its open faces. Hence a closed or open face of $\cG$ which passes through lines of $\cH$ can 
generate several open faces of $\cH$.

We denote $L(\cH)$ and $F(\cH)$ the set of lines and internal faces of $\cH$, and $N(\cH)$ and $F_{ext}(\cH)$ the set of external legs and external faces of $\cH$. When no confusion is possible we also write $L$, $F$ etc for the cardinality 
of the corresponding sets.

\begin{definition}
The connected components of a subgraph $\cH$ are defined as the subsets of lines of the maximal factorized rectangular blocks of its $\epsilon_{ef}$ incidence matrix.
\end{definition}
Hence elementary connections between lines now come from their common internal faces. 
Notice also that a connected graph $\cH$ has always a connected colored extension $\cH_{c}$, but the converse is not true.

\medskip

\noindent{\bf 
Example:} The tadpole graph $\cG$ of the Appendix (Figure \ref{melop6}) has three lines, one vertex, no external legs and ten internal (closed) faces. It has eight subgraphs. 
The subgraph $S_1= \{l_1\}$ has one line, one vertex, three internal (closed) faces, one external face and two external legs (the two halves of line $l_2$). The same is true
for the subgraph $S_3= \{l_3\}$. The subgraph $S_2= \{l_2\}$ has one line, one vertex, two internal (closed) faces, two external faces and four external legs.
The subgraph $H= \{l_1, l_3\}$ has two lines, one vertex, two external legs, 6 closed faces and two external faces, but it is not connected (although its colored extension is).
It has two connected components $S_1$ and $S_3$ which although having a vertex and two external lines in common are not connected through their faces; the
2 by 6 incidence matrix of this subgraph factorizes in two 1 by 3 blocks, those of $S_1$ and $S_3$. 

\

The third notion we need to extend to tensorial group field theory is that of {\it contraction of a subgraph}. In graph theory, contracting a line simply means shrinking it until its two end vertices are identified.
In our situation this does not make sense anymore, since the vertices have an internal color structure that prevents us from simply concatenating them. 
Instead, one can identify the two end nodes in the colored representation of the graph, which in turn should be interpreted as the identification, color by color, of the group variables they are attached too. 
This naturally leads to the realization of line contractions as dipole contractions. In the
following, we will therefore simply call \textit{contraction of the line $e$} the contraction of the canonically associated dipole $e$.
%\begin{definition}
%We call \textit{contraction} of a line $\ell$ in a graph $\cG$ the operation that consists in contracting the unique dipole that contains $\ell$. The resulting graph is noted $\cG / \ell$.
%\end{definition} 
This definition extends to contractions of subgraphs.
\begin{definition}
We call contraction of a subgraph $\cH \subset \cG$ the successive contractions of all the lines of $\cH$. The resulting graph is independent of the order in which the lines of $\cH$ are contracted,
and is noted $\cG / \cH$.
\end{definition}
\begin{proof}
To confirm that this definition is consistent, we need to prove that dipole contractions are commuting operations. Consider two distinct lines $e_1$ and $e_2$ in a graph $\cG$, and call $\cH$ the
subgraph made of 
$e_1$ and $e_2$. We distinguish three cases.
\begin{enumerate}[(i)]
 \item $\cH$ is disconnected. This means that $e_1$ and $e_2$ are part of two independent dipoles, with no colored line in common, and the two contraction operations obviously commute.
 \item $\cH$ is connected, and none of its internal faces contain both $e_1$ and $e_2$. This means that $e_1$ and $e_2$ are contained in two dipoles $d_1$ and $d_2$, such that for each
color $i$, at most one line of color $i$ connects $d_1$ to $d_2$. Hence contracting $d_1$ (resp. $d_2$) does not change the nature of the dipole in which $e_2$ (resp. $e_1$) is contained. 
So here again, $d_1$ and $d_2$ are local objects which can be contracted independently. 
 \item $\cH$ has $q \geq 1$ internal faces containing both $e_1$ and $e_2$. In this case, the contraction of $e_1$ (resp. $e_2$) changes the nature of the dipole in which $e_2$ (resp. $e_1$) is
contained: $q$ internal faces are added to it. However, when contracting the second tadpole, these faces are deleted, so for any order in which the contractions are performed, 
all the internal faces are deleted. As for the external faces, the situation is the same as in the previous case, and the two contractions commute.
\end{enumerate}
\end{proof}
We can finally give a more global characterization of the contraction operation. 
\begin{proposition}
Let $\cH$ be a subgraph of $\cG$, and $\cH_c$ its colored extension. The contracted graph $\cG / \cH$ is obtained by:
\begin{enumerate}[(a)]
 \item Deleting all the internal faces of $\cH$;
 \item Replacing all the external faces of $\cH_c$ by single lines of the appropriate color.
\end{enumerate}
\end{proposition}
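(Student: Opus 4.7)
The plan is to proceed by induction on $|L(\cH)|$, exploiting the preceding proposition on commutativity of dipole contractions, which ensures that $\cG / \cH$ may be computed by successively contracting the lines of $\cH$ in any chosen order. The reduction to single-line contractions is essential because the only combinatorial rule at hand is the $k$-dipole contraction defined earlier, which manipulates exactly two nodes and a bounded number of colored lines.

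For the base case $\cH = \{ e \}$, I let $k$ denote the integer such that $e$ is a $k$-dipole of $\cG$. Step (i) of the dipole contraction removes the two relevant nodes of $\cH_c$ together with the $k - 1$ additional colored lines joining them; the $k - 1$ bigon faces of the dipole are exactly the faces of $\cG$ closed solely through $e$, hence the internal faces of $\{ e \}$ in the sense of the preceding definition, and they are erased. This is (a) for $|L(\cH)| = 1$. Step (ii) reconnects the remaining $d - k$ legs color by color, and each such reconnection turns precisely one external face of $\cH_c$ into a single colored line of $\cG / \{ e \}$ of the appropriate color; this is (b) for $|L(\cH)| = 1$.

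For the inductive step, I pick any line $e \in \cH$, set $\cH' = \cH \setminus \{ e \}$, and use commutativity to write $\cG / \cH = (\cG / \cH') / \{ e \}$. The inductive hypothesis applied to $\cH'$ yields the intermediate graph, and the base case applied to the image of $e$ inside $\cG / \cH'$ completes the calculation. The main obstacle, and the step that requires the most careful bookkeeping, is verifying that the sequential procedure is compatible with the global internal/external distinction on $\cH$. Concretely one must check that every internal face of $\cH$ passing through $e$ survives the first stage as an internal face of the dipole of $e$ in $\cG / \cH'$, so that it is removed in the final step; and that every external face of $\cH_c$ passing through $e$ restricts, after contracting $\cH'$, to an external face piece of that same dipole, which the base case then collapses to a single colored line. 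Both properties follow from the fact that each individual dipole contraction acts locally around the contracted pair of nodes, modifying only the faces incident to that dipole and never splitting a face into pieces that were not already maximal open arcs of the external faces of $\cH_c$ in $\cG_c$. Once this local stability is made explicit on the colored extension, the induction closes and the two operations (a) and (b) accumulate exactly to the claimed characterization.
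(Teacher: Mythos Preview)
Your proposal is correct and follows essentially the same approach as the paper: induction on $|L(\cH)|$, with the single-dipole base case read off from the definition of dipole contraction, and the inductive step obtained by peeling off one line and tracking how the internal/external face partition of $\cH$ behaves under the intermediate contraction. The paper's version of the inductive step is slightly more explicit in that it splits the internal faces of $\cH$ into three classes (those internal to $\cH_0$, those shared by $\cH_0$ and $e$, those passing only through $e$) and checks each is deleted, whereas you phrase the same verification as a ``local stability'' property of dipole contractions; the content is identical.
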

\begin{proof}
We prove this by induction on the number of lines in $\cH$. If $\cH$ contains one single line, then it is a dipole, and the proposition is true according to the very definition of a dipole
contraction. Now, suppose that $\cH$
is made of $n>1$ lines, $n-1$ of them being contained in the subgraph $\cH_0 \subset \cH$, and call the last one $e$. The set of internal faces in $\cH$ decomposes into several subsets. The faces
which are internal to $\cH_0$ are deleted 
by hypothesis when contracting $\cH_0$. Those common to $\cH_0$ and $e$ become internal dipole faces once $\cH_0$ is contracted, so they are deleted when $e$ is contracted, and the same is of course
true for the remaining internal faces which have $e$ as single line. The same distinction of cases applied to external faces of $\cH_c$ allows to prove that they are replaced by single lines of the
appropriate color, which achieves
the proof. 
\end{proof}

\

Now that we have these notions at our disposal, we can address the question of how contracting connected subgraphs within bigger graphs affect their properties. In usual graph theory, the
number of connected components in a graph 
$\cG$ is not affected by the contraction of a subgraph $\cH$. This property is crucial to scale analysis in field theory, which relies on contractions of specific connected divergent
subgraphs within bigger connected graphs. The new notion of connectedness in TGFT should be conserved under contraction of (at least) some class of connected subgraphs. The key
difference with usual field theory is that now the 
contraction of a line does not generically conserve connectedness. Indeed it is easy to check that

\begin{proposition}\label{disconnected}
\begin{enumerate}[(i)]
 \item For any connected graph $\cG$, if $e$ is a line of $\cG$ contained in a $d$-dipole, then $\cG / e$ is connected.
 \item For any $1 \leq q \leq d - k + 1$, there exists a connected graph $\cG$ and a $k$-dipole $e$ such that $\cG / e$ has exactly $q$ connected components. 
\end{enumerate}
\end{proposition}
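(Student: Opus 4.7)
The key preliminary step is to translate contraction into an operation on line-face connectivity. Define the auxiliary graph $G_\cG$ whose vertices are the color-$0$ lines of $\cG$, two vertices being adjacent iff the corresponding lines share an internal face of $\cG$. By the preceding proposition, contracting a $k$-dipole $e$ removes the $k-1$ internal bigon faces of the dipole---each containing $e$ as its only color-$0$ line---and preserves the $d+1-k$ external faces through $e$ after removal of $e$, with the same non-$e$ color-$0$ lines unchanged. Hence bigons through $e$ contribute no edges of $G_\cG$ between lines distinct from $e$, while external faces through $e$ retain all their non-$e$ adjacencies. It follows that $G_{\cG/e}$ is precisely the induced subgraph $G_\cG\setminus\{e\}$, so connectedness of $\cG/e$ is equivalent to $e$ not being a cut vertex of $G_\cG$, and the component count of $\cG/e$ equals that of $G_\cG\setminus\{e\}$.

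For part (i), a $d$-dipole has exactly one external face, namely the face of the unique color $c^\ast$ absent from the dipole. All neighbors of $e$ in $G_\cG$ lie in this common face and therefore form a clique. Removing the vertex $e$ leaves that clique intact, and any $G_\cG$-walk passing through $e$ can be rerouted via two of its neighbors along the clique. Consequently $G_\cG\setminus\{e\}$ is connected, hence so is $\cG/e$.

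For part (ii), I would construct, for each prescribed $q\in\{1,\ldots,d-k+1\}$, an explicit connected graph in which $e$ is a cut vertex of $G_\cG$ whose removal produces $q$ components. The underlying idea is that the $d+1-k$ external faces through $e$ furnish $d+1-k$ independent ``bridges'' across $e$: one partitions the non-dipole colors into $q$ non-empty groups $S_1,\ldots,S_q$ and attaches, at the half-lines of $n$ and $\overline{n}$ of the corresponding colors, $q$ pairwise disjoint gadgets---each a small bubble equipped with a tadpole color-$0$ line---tailored so that gadget $j$ shares a face with $e$ only for colors in $S_j$. In $\cG$ all gadgets are tied together through $e$ via its external faces; after contraction of $e$ the bridges vanish and the graph falls apart into the $q$ gadgets.

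The principal difficulty lies in part (ii): one must engineer the gadgets within the $(d+1)$-coloring constraints so that no internal face of $\cG$ unrelated to $e$ is ever shared between two distinct gadgets, since any such face would re-merge the corresponding components in $\cG/e$. The extremal case $q=d-k+1$ is attained by letting each $S_j$ consist of a single color; intermediate values of $q$ are reached by grouping colors. Once the combinatorial construction is in place, the reduction of the first paragraph yields the prescribed number of components of $\cG/e$ and completes the proof.
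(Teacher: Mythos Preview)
The paper states this proposition without proof, prefacing it only with ``it is easy to check''. Your argument therefore supplies what the paper omits, and there is no competing approach to compare against.

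Your reduction to the auxiliary line--face graph $G_\cG$ is correct and is the natural way to make the paper's incidence-matrix notion of connectedness explicit: the connected components of $\cG$ in the paper's sense are exactly the connected components of $G_\cG$. The key identity $G_{\cG/e}=G_\cG\setminus\{e\}$ also holds: contracting a dipole deletes only the bigon faces (each containing $e$ as its sole color-$0$ line) and merely shortens the remaining faces through $e$ without altering which other lines they contain or whether they are closed. Hence adjacencies among lines distinct from $e$ are preserved in both directions.

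For part (i) the clique argument is valid whenever the single non-dipole face through $e$ is internal to $\cG$. You should note the residual case where that face is open in $\cG$: then $e$ has no neighbours in $G_\cG$ at all (the $d-1$ bigons containing only $e$), so connectedness of $\cG$ forces $\cG=\{e\}$ and the conclusion is trivial. With that caveat the argument is complete.

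For part (ii) your construction scheme---partitioning the $d+1-k$ non-dipole colors into $q$ blocks and hanging independent tadpole gadgets off the dipole through those color groups---is the right idea, and the difficulty you flag (ensuring no internal face unrelated to $e$ links two gadgets) is indeed the only point requiring care. Since the paper offers no construction either, your sketch is already more detailed than what the authors provide; carrying it out explicitly for one small case (say $d=4$, $k=1$, $q=4$) would make the proof self-contained.
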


\

\noindent{\bf Localization/Contraction Operators}

\noindent Combinatorial contractions of graphs are associated to localization operators acting on amplitudes. Let us consider a graph $\cG$, and a connected subgraph $\cH \subset \cG$. We define an operator
$\tau_\cH$ 
by its action on the integrand of $\cG$. The amplitude $\cA_\cG$ is of the form:
\bes\label{ampl}
\cA_\cG &=& \left[ \prod_{e \in L(\cH)} \int \extd \alpha_{e} \, e^{- m^2 \alpha_e} \int \extd h_e \right] 
\left( \prod_{f \in F(\cH)} K_{\alpha(f)}\left( \overrightarrow{\prod_{e \in \partial f}} {h_e}^{\epsilon_{ef}} \right) \right)
 \nn \\ 
 && \left[ \prod_{e \in N(\cH)} \int \extd g_e \right] 
 \left( \prod_{f \in F_{ext}(\cH)} K_{\alpha(f)} \left( g_{s(f)}
\left[\overrightarrow{\prod_{e \in \partial f}} {h_e}^{\epsilon_{ef}}\right] g_{t(f)}^{\inv} \right) \right)
\times \cR_{\cG \setminus \cH}\left(\{g_{s(f)}, g_{t(f)}\}\right)\,,
\ees
where $\cR_{\cG \setminus \cH}$ only depends on $g$ variables appearing in the external faces of $\cH$. We then define $\tau_\cH$ as:
\beq
\tau_\cH \cR_{\cG \setminus \cH}\left(\{g_{s(f)}, g_{t(f)}\}\right) \equiv \cR_{\cG \setminus \cH}\left(\{g_{s(f)}, g_{s(f)}\}\right)\,,
\eeq 
that is by moving all target variables of the external faces of $\cH$ to the sources. This definition is motivated by the fact that when the parallel transports inside 
$\cH$ are negligible, holonomies along external faces can be well approximated by directly connecting the two points at the boundary of $\cH$.

\begin{proposition} 
 Let $\cH \subset \cG$ be a connected subgraph. The action of $\tau_{\cH}$ on $\cA_\cG$ factorizes as:
\beq
\tau_\cH \cA_\cG = \nu_{\rho}(\cH) \cA_{\cG / \cH} \,,
\eeq
where $\nu_{\rho}(\cH)$ is a numerical coefficient depending on the cut-off $\rho$, and given by the following integral:
\beq
\nu_{\rho}(\cH) \equiv  \left[ \prod_{e \in L(\cH)} \int_{M^{-2 \rho}}^{+ \infty} \extd \alpha_{e} \, e^{- m^2 \alpha_e} \int \extd h_e \right] 
\left( \prod_{f \in F(\cH)} K_{\alpha(f)}\left( \overrightarrow{\prod_{e \in \partial f}} {h_e}^{\epsilon_{ef}} \right) \right)\,.
\eeq
\end{proposition}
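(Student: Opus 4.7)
The plan is to start from the expression \eqref{ampl} and proceed in three steps.

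First, I apply $\tau_\cH$ to the integrand. By definition, $\tau_\cH$ replaces every $g_{t(f)}$ by $g_{s(f)}$ in $\cR_{\cG\setminus\cH}$, and correspondingly in each external-face heat kernel, turning $K_{\alpha(f)}(g_{s(f)}[\prod h_e^{\epsilon_{ef}}]g_{t(f)}^{\inv})$ into $K_{\alpha(f)}(g_{s(f)}[\prod h_e^{\epsilon_{ef}}]g_{s(f)}^{\inv})$. The key input is that the heat kernel on a compact Lie group is a class function, $K_\alpha(gxg^{\inv})=K_\alpha(x)$, so these external-face factors collapse to $K_{\alpha(f)}([\prod h_e^{\epsilon_{ef}}])$, now independent of the boundary variables $\{g_{s(f)}\}$.

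Second, I exploit this decoupling to factorize the integrations. After the first step, the variables $h_e$ and $\alpha_e$ with $e\in L(\cH)$ appear only inside the product of heat kernels over $F(\cH)\cup F_{ext}(\cH)$, whereas the boundary variables $\{g_{s(f)}\}$ appear only inside $\tau_\cH \cR_{\cG\setminus\cH}$. Since the integration domains are independent and the integrand is now a product of two pieces with disjoint variable content, the integral splits as a product of two independent factors.

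Third, I identify these two factors. Iterating the heat-kernel convolution identity $\int \extd h\,K_\alpha(xh)K_\beta(h^{\inv}y)=K_{\alpha+\beta}(xy)$ along each external-face arc of $\cH$, whose two endpoints coincide after $\tau_\cH$, reabsorbs those arcs and reduces the $\cH$-sector integrand to one involving only $F(\cH)$; together with the prescribed $\alpha$-range $[M^{-2\rho},+\infty)$, this is exactly the defining expression of $\nu_\rho(\cH)$. The complementary factor, after relabelling the leftover boundary variables as endpoint holonomies of the lines that replace the external faces of $\cH$, coincides with $\cA_{\cG/\cH}$ by virtue of the contraction proposition: the internal faces of $\cH$ have been deleted (they are now inside $\nu_\rho(\cH)$), and each external face of $\cH$ has been replaced by a single colored line, so the face structure of $\cG/\cH$ emerges precisely from $\tau_\cH \cR_{\cG\setminus\cH}$ together with the reabsorbed external-face segments.

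\textbf{Expected main obstacle.} The delicate bookkeeping is in the third step: one must check that the $F_{ext}(\cH)$ contributions neither leave a spurious factor depending on the $\alpha_e$ for $e\in L(\cG)\setminus L(\cH)$, nor spoil the face heat kernels assembled into $\cA_{\cG/\cH}$. This is handled by splitting every $\alpha(f)$ with $f\in F_{ext}(\cH)$ into its $\cH$- and $\cG\setminus\cH$-parts and verifying, via the convolution identity, that the first part is precisely what completes the definition of $\nu_\rho(\cH)$ while the second part supplies the Schwinger parameters of the face of $\cG/\cH$ obtained by the line replacement.
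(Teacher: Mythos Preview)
Your Step 1 misreads the definition of $\tau_\cH$. In the paper, $\tau_\cH$ acts \emph{only} on $\cR_{\cG\setminus\cH}$, replacing $g_{t(f)}\to g_{s(f)}$ there; it does \emph{not} touch the external-face heat kernels $K_{\alpha(f)}\bigl(g_{s(f)}[\prod h_e^{\epsilon_{ef}}]g_{t(f)}^{-1}\bigr)$. After the correct application, each variable $g_{t(f)}$ appears \emph{only} inside its external-face heat kernel, and the integration over $g_{t(f)}$ then gives $1$ by the normalization $\int \extd g\, K_\alpha(g)=1$. That single step disposes of all external-face factors at once and yields the claimed factorization directly; this is exactly the paper's argument.

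Your route, by contrast, substitutes $g_{t(f)}\to g_{s(f)}$ also in the heat kernels, uses the class-function property to get $K_{\alpha(f)}\bigl(\prod h_e^{\epsilon_{ef}}\bigr)$, and then tries to ``reabsorb'' these into $\nu_\rho(\cH)$ via convolution. But this does not work: the resulting factors depend on the $h_e$ and $\alpha_e$ with $e\in L(\cH)$ and are coupled to the internal-face product through those same variables, so no convolution integral is available to eliminate them; and the definition of $\nu_\rho(\cH)$ involves only $F(\cH)$, not $F_{ext}(\cH)$, so they are genuine spurious extra factors. Your ``expected main obstacle'' paragraph compounds the issue by asserting that the $\cH$-part of $\alpha(f)$ for external faces ``completes the definition of $\nu_\rho(\cH)$'' --- it does not. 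The fix is simply to apply $\tau_\cH$ as defined and integrate the freed $g_{t(f)}$ variables using heat-kernel normalization; the class-function and convolution identities are not needed.
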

\begin{proof}
 Applying $\tau_\cH$ in equation (\ref{ampl}), one remarks that heat kernels associated to external amplitudes can readily be integrated with respect to the variables $g_{t(f)}$. These integrals give
trivial contributions, 
thanks to the normalization of the heat kernel. We are therefore left with an integral over the internal faces of $\cH$, giving $\nu_{\rho}(\cH)$, times an amplitude which is immediately identified to be that
of the contracted graph $\cG / \cH$.
\end{proof}
{\bf Remark. } One can also use the same kind of factorization for the action of $\tau_\cH$ on $\cA_{\cG , \mu}$, in which case we will use the notation $\nu_\mu (\cH)$: $
\tau_\cH \cA_{\cG , \mu} = \nu_{\mu}(\cH) \cA_{\cG / \cH , \mu}$.

An illustration of the factorizability property is given in Fig. \ref{melo1}. This definition of the localization operators will be sufficient for the renormalization of logarithmic divergences, hence for all cases we will consider. 
As usual, the renormalization of power-like divergencies in more complicated models will require to push to a higher order the Taylor expansion around localized terms.

\subsection{High subgraphs, contractiblity, and traciality}

The graph-theoretic tools that have just been introduced allow to describe the general structures which will be relevant to the multi-scale analysis of tensorial group field theories. As compared to ordinary field theories, renormalizability 
of tensorial group field theories involves non-trivial refinements, which have to do with the conditions at which a high subgraph can be seen as a quasi-local effective object. In the usual $\varphi^{4}$ model
for example, quasi-locality
depends only on the separation of internal scales from external ones. We know already that this property is lost in tensorial models like the one in \cite{tensor_4d}, where high subgraphs can give rise
to disconnected effective 
invariants. We are going to see that the non-trivial projection we added in the propagator complicates the situation even more. The challenge is to keep the notion of scales firmly locked into the propagators:
defining scales for faces is a non-starter, since, contrary to propagators, the combinatorial structure of faces vary from graph to graph (it is a global construction in terms of Feynman rules), hence have no QFT meaning.

\

Let $\cG$ be an open graph, $\mu = \{ i_e \}$ a scale attribution, and $\cA_{\cG , \mu}$ the corresponding amplitude. As usual, we define high subgraphs according to internal and external scales. The
notion of external leg itself has been chosen to be compatible with the refined notion of connectedness we adopted, hence the correct notion of external scale is that of external legs.
\begin{definition}
\begin{enumerate}[(i)]
\item Given a subgraph $\cH \in \cG$, one defines:
\beq
i_{\cH}(\mu) = \inf_{e \in L(\cH)} i_e (\mu)\,, \qquad e_{\cH}(\mu) = \sup_{e \in N(\cH)} i_e (\mu)\,.
\eeq
\item A subgraph $\cH \in \cG$ is called \textit{high} if it is {\emph{connected}} and $e_{\cH}(\mu) < i_{\cH}(\mu)$. 
\end{enumerate}
\end{definition}
As usual the key to successful \emph{renormalization} is an approximate locality property of all the high \emph{divergent}  subgraphs seen from their external legs.

Our definition that a line is external to a subgraph $\cH$ if and only if there exists one colored line linking it to a line of $\cH$
reflects the need to define this approximate locality with respect to properties of the faces. This is also a consequence of the fact that, in our discrete gauge theoretic framework, 
the ``size" of a given group element can only be compared to the size of the other elements appearing in the same faces. 
 
\

In scalar field theory, any high subgraph (no matter whether divergent or not) systematically \textit{looks local} and can be interpreted as an effective interaction. In matrix models
such as \cite{GW}, this property is already lost for general high graphs and holds only for regular subgraphs (planar with a single external boundary), a subclass which fortunately contains all the divergent ones.
In the tensorial group field theories considered in this paper, this property is further restricted, but still applies to all divergent graphs. This is a very important result, since it is the one that makes renormalization possible, ultimately.

\

The first source of complications, which lies in the non-conservation of connectedness under dipole contractions 
(proposition \ref{disconnected}), was already identified in \cite{tensor_4d}.

\

The second complication is a new feature of the models considered here, and is due to the additional projection introduced in the propagator. Indeed, high propagators in these models do not
approximate the identity operator, but rather the projector on translation invariant fields. At the level of the amplitudes, this tells us that in high subgraphs, holonomies around closed faces are
very close to $\one$, which in itself does not say anything about the value of individual group elements $h_e$. But a dipole contraction is a good approximation only when the variables $h_e$ themselves
are close to $\one$, hence the tension.

\

A first remark is that gauge-symmetry allows to set $h_e = \one$ for $1$-dipoles between different vertices and
to contract them, reducing all loop lines to tadpole lines of a - big but single - connected invariant vertex, usually called a ``rosette" in the matrix context. In this rosette
the $k$-dipoles with $k>1$ are not a problem, as their internal faces automatically peak $h_e$ around $\one$, 
allowing to perform their contraction; but remark that this operation may disconnect the rosette into several connected components. 
Finally in each of these components or sub-rosettes the issue reduces to the study of $1$-dipoles. Typically a set of high such 1-dipoles 
peaks some $h_e$ around $\one$, but not all in general. 

\
To disentangle the loss of tensorial invariance from the loss of connectedness, we define two classes of subgraphs, the {\it contractible} and the {\it tracial} subgraphs. 
\begin{definition}
Let $\cG$ be a connected graph, and $\cH$ one of its connected subgraphs. 
\begin{enumerate}[(i)]
\item If $\cH$ is a tadpole, $\cH$ is \textit{contractible} if, for any group elements assignment $(h_e)_{e \in L(\cH)}$:
\beq
\left( \forall f \in F(\cH)\,, \;  \overrightarrow{\prod_{e \in \partial f}} {h_e}^{\epsilon_{ef}} = \one \right) \Rightarrow \left( \forall e \in L(\cH) \, , \; h_e = \one \right)\,.
\eeq
\item In general, $\cH$ is contractible if it admits a spanning tree $\cT$ such that 
$\cH / \cT$ is a contractible tadpole.
\item $\cH$ is \textit{tracial}\footnote{We thank Adrian Tanasa for suggesting this name.} if it is contractible and the contracted graph $\cG / \cH$ is connected.
\end{enumerate}
\end{definition}

A contractible graph is therefore a subgraph on which any flat connection is trivial up to a gauge transformation. Note that this gauge freedom is what makes the contraction with respect to a spanning tree an essential feature of the definition. On the other hand, the notion of traciality is independent of the choice of tree, as it is a statement about $\cG / \cH$, in which all internal lines of $\cH$ have been contracted.

\
In the multi-scale effective expansion, high divergent subgraphs give rise to effective couplings. To apply this procedure in our context, such subgraphs need to be
tracial, or at least contractible. Traciality ensures that the divergence of a high subgraph can be factorized into a divergent coefficient times a \emph{connected}
invariant. For  high divergent subgraphs which are contractible but not tracial, a factorization of the divergences is still possible, but 
in terms of disconnected invariants; these have been called anomalous terms in \cite{tensor_4d}. It is not clear yet whether this is a major issue and how these
anomalies should be interpreted physically, but in the models considered below all the divergent high subgraphs are tracial. Indeed
we already noticed that any $k$-dipole with $k >1$ is contractible, and that any $d$-dipole is tracial, as its contraction also preserves connectedness. 
These two facts combined provide us already with an interesting class of tracial subgraphs. We call them {\it melopoles} because they combine the idea of melonic graphs
and tadpoles. The high divergent graphs considered in the models of this paper will all be melopoles.

\begin{definition}
In a graph $\cG$, a \emph{melopole}
is a connected single-vertex subgraph $\cH$ (hence $\cH$ is made of tadpole lines attached to a single vertex in the ordinary sense), 
such that there is at least one ordering (or ``Hepp's sector") of its $k$ lines as $l_1, \cdots , l_k$ such that $ \{l_1 ,  \cdots , l_{i} \} / \{l_1 , \cdots , l_{i-1} \} $ is a $d$-dipole for $1 \le i \le k$.
\end{definition}

\begin{figure}
\begin{center}
\includegraphics[scale=0.7]{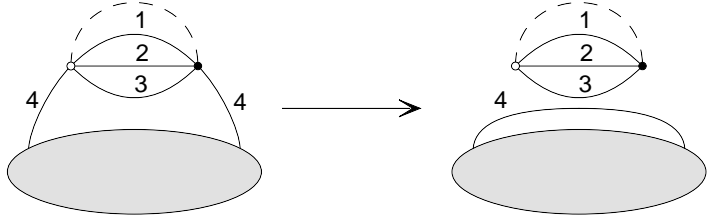}
\caption{A single-line melopole (left), and the result of its contraction}
\label{melo1}
\end{center}
\end{figure}

The simplest melopole has just one line and is shown in Figure \ref{melo1}. Its contraction within a connected graph (grey blob) results in a connected graph times a coefficient (of which a graphical representation is given).
In the example discussed in the Appendix (Figure \ref{melop6}), the subgraphs $\cH = \{l_1\}$ or $\cH = \{l_1, l_2\}$ are melopoles; the subgraphs $\cH = \{l_2\}$ and $\cH = \{l_1, l_3\}$ are not (the last one because it is not connected).
\begin{proposition}
Any melopole is tracial.
\end{proposition}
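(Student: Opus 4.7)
The plan is to verify the two clauses of traciality separately: that $\cG/\cH$ stays connected, and that $\cH$ itself is contractible. Both proceed by induction on $k$, the number of lines of the melopole $\cH$, using the Hepp sector ordering $l_1,\dots,l_k$ guaranteed by the definition.

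\textbf{Connectedness of $\cG/\cH$.} I would iterate part~(i) of Proposition~\ref{disconnected}. Since $l_1$ is a $d$-dipole in $\cG$ and $\cG$ is connected, $\cG/\{l_1\}$ is connected. By the defining property of the ordering, $l_2$ is then a $d$-dipole in $\cG/\{l_1\}$, so $\cG/\{l_1,l_2\}$ is connected, and iterating $k$ times yields the connectedness of $\cG/\cH$. This is the easy half.

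\textbf{Contractibility of $\cH$.} Because a melopole is by definition a single-vertex subgraph whose lines are all loops, $\cH$ is already a tadpole (no spanning tree contraction is needed) and only clause (i) of the contractibility definition has to be checked. For $k=1$, $\cH$ is a $d$-dipole tadpole carrying exactly $d-1$ internal bigon faces; each such bigon traverses the color-$0$ line $l_1$ once and a single colored line (which carries no group element), so its holonomy equals $h_{l_1}^{\pm 1}$, and the $d-1$ trivialization conditions force $h_{l_1}=\one$. For $k\ge 2$, the same $d-1$ bigons associated to the $l_1$ $d$-dipole in $\cG$ still live inside $\cH$ and still involve $h_{l_1}$ alone, so they force $h_{l_1}=\one$. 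Substituting this into the remaining internal face-holonomy equations of $\cH$ converts them into the internal face-holonomy equations of $\cH/\{l_1\}$, which is a melopole on $k-1$ lines with inherited ordering $l_2,\dots,l_k$: the ordering property survives because contractions commute, so $\{l_2,\dots,l_i\}/\{l_2,\dots,l_{i-1}\}$ computed in $\cG/\{l_1\}$ coincides with $\{l_1,\dots,l_i\}/\{l_1,\dots,l_{i-1}\}$ in $\cG$, which is a $d$-dipole by hypothesis. The inductive hypothesis then yields $h_{l_2}=\cdots=h_{l_k}=\one$.

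The only nontrivial point, and the main thing to get right, is the structural compatibility invoked at the inductive step: substituting $h_{l_1}=\one$ into the face-holonomy equations of $\cH$ must produce exactly those of $\cH/\{l_1\}$, with no face missing or spurious. This is a direct application of the preceding proposition characterizing contraction (internal faces of the contracted piece are deleted, external faces are replaced by single lines of the appropriate color), but the bookkeeping of which face of $\cG$ becomes which face of $\cH/\{l_1\}$ must be done carefully. With that identification in hand, the system of face equations is triangular in the order $h_{l_1},h_{l_2},\dots,h_{l_k}$ and the induction closes immediately.
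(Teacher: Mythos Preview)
Your proof is correct and follows exactly the same route as the paper: both parts of traciality are obtained by iterating $d$-dipole contractions along the Hepp sector ordering, invoking Proposition~\ref{disconnected}(i) for connectedness at each step. The paper's own proof is a two-line sketch (``Obviously it is contractible; and connectedness cannot be lost at any stage if one contracts in the order of the correct Hepp's sector''), so your inductive unpacking of contractibility---isolating the $d-1$ bigon faces of $l_1$ to force $h_{l_1}=\one$, then identifying the residual system with that of $\cH/\{l_1\}$---is precisely the content hidden behind the word ``obviously''.
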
 
\begin{proof}
Obviously it is contractible; and connectedness cannot be lost at any stage if one contracts in the order of the correct Hepp's sector.
\end{proof}

\
Now that we have clarified these notions and obtained a first class of very simple tracial graphs, namely the melopoles, we
can proceed with the multislice analysis and renormalization of the simplest tensor group field theories. 
The Gallavotti-Nicol\'o tree is an abstract tree encoding the inclusion order on high subgraphs of $(\cG, \mu)$, and hence an expansion of $\cA_{\cG , \mu}$ in terms of
effective vertices. For any $i \in \llbracket 0 , \rho \rrbracket$, we define $\cG_{i}$ as the subgraph made of all lines of $\cG$ with scale higher or equal to $i$. We further call $k(i)$ its number
of connected components, and $\{ \cG_{i}^{(k)} \,, k \in \llbracket 1 , k(i) \rrbracket \}$ its connected components. This latter set is exactly the set of high subgraphs. Two high
subgraphs are either included into one another, or disjoint. They therefore form what is called an \textit{inclusion forest}, 
because their inclusion relations can be represented by an abstract forest. When $\cG$ itself is
connected, and because it is also high by convention, the inclusion forest is 
actually an \textit{inclusion tree}: the Gallavotti-Nicol\'o tree.

 %!TEX root = ../pphi.tex
 
% multiscale

\section{Multi-scale analysis of Abelian models}
 
 \subsection{Propagator bounds}
 
 The explicit form of the heat-kernel at time $\alpha$ on $\U(1)$ is
 \beq
 K_{\alpha}(\theta) = \frac{\e^{- \frac{1}{4 \alpha} \theta^2 }}{\sqrt{\alpha}} \left( 1 + 2 \sum_{n = 1}^{\infty} \e^{- \frac{\pi^2 n^2}{\alpha}} \cosh\left( \frac{n \pi}{\alpha} \theta \right) \right)\,,
 \eeq
 so that the propagator can explicitly be written as:
 \beq
 C(\theta_1, \dots , \theta_d ; \theta_1' , \dots , \theta_d' ) = \int_{0}^{+ \infty} \extd \alpha \, \frac{\e^{- \alpha m^2}}{\alpha^{d/2}}  \int_{0}^{2 \pi} \extd \lambda \, \e^{- \frac{1}{4 \alpha} \sum_\ell (\theta_\ell - \theta_\ell' + \lambda)^{2} }
 T(\alpha ; \theta_1 - \theta_1' + \lambda , \dots , \theta_d - \theta_d' + \lambda)\,,
 \eeq
 with 
 \beq
 T(\alpha ; \theta_1 , \dots , \theta_d ) \equiv \prod_{\ell = 1}^{d} \left( 
 1 + 2 \sum_{n = 1}^{\infty} \e^{- \frac{\pi^2 n^2}{\alpha}} \cosh\left( \frac{n \pi}{\alpha} \theta_\ell \right) \right)\,.
 \eeq
 
 One can then prove generic bounds for the sliced propagators $C_i$, on which the whole power-counting will rely. Such bounds have been computed in the model of \cite{tensor_4d}, and immediately imply the following:
 
 \begin{proposition}
 There exist constants $K > 0$ and $\delta > 0$, such that for all $i \in \mathbb{N}$:
 \bes\label{propa_bound} 
 C_i (\theta_1 , \dots , \theta_d ; \theta_1' , \dots , \theta_d') &\leq& K M^{(d-2) i } \int \extd \lambda \,
 \e^{- \delta M^{i} \sum_\ell |\theta_\ell - \theta_\ell' + \lambda| }\,, \\
\forall \ell \in \llbracket 1 , d \rrbracket \,, \qquad\frac{\partial}{\partial \theta_\ell} C_i (\theta_1 , \dots , \theta_d ; \theta_1' , \dots , \theta_d') &\leq& K M^{(d-1) i } \int \extd \lambda \,
 \e^{- \delta M^{i} \sum_\ell |\theta_\ell - \theta_\ell' + \lambda| }\,, \label{deriv_bound}
 \ees
 \end{proposition}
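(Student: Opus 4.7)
The plan is to reduce the statement to a pointwise estimate on each individual $\U(1)$ heat kernel factor $K_\alpha(\theta_\ell - \theta_\ell' + \lambda)$ that appears in the explicit formula for $C_i$, keeping the projector integration $\int_0^{2\pi} \extd \lambda$ on the outside. Since the integrand factorizes across the colors $\ell = 1, \dots, d$, once I have the one-variable bound $K_\alpha(\theta) \leq K' \alpha^{-1/2} \, e^{-\delta |\theta|/\sqrt{\alpha}}$ (with $|\theta|$ understood as the minimal lift on $\mathbb{R}$), the $d$-fold product contributes $\alpha^{-d/2}\, e^{-\delta \sum_\ell |\theta_\ell - \theta_\ell' + \lambda|/\sqrt{\alpha}}$. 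On the slice $[M^{-2i}, M^{-2(i-1)}]$ one has $\alpha^{-1/2} \sim M^i$ and slice width $\sim M^{-2i}$, so integrating over $\alpha$ yields the advertised prefactor $M^{di} \cdot M^{-2i} = M^{(d-2)i}$, while the exponent becomes $-\delta M^i \sum_\ell |\theta_\ell - \theta_\ell' + \lambda|$ (after harmlessly shrinking $\delta$), establishing \eqref{propa_bound}.

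To obtain the one-variable bound I would pass to the winding (Poisson-dual) representation $K_\alpha(\theta) = (4\pi\alpha)^{-1/2} \sum_{n \in \mathbb{Z}} e^{-(\theta - 2\pi n)^2/(4\alpha)}$, which absorbs the tail function $T(\alpha; \cdot)$ into a sum over winding sectors and makes the estimate uniform in $\alpha$. The elementary Gaussian-to-exponential inequality $e^{-x^2/(4\alpha)} \leq C(\delta) \, e^{-\delta |x|/\sqrt{\alpha}}$, which holds for any $\delta > 0$ with a universal constant, then yields the bound term by term; the sum over $|n| \geq 1$ only costs a bounded geometric factor, since each additional winding sector introduces a uniform suppression in $n$. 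This is exactly the heat-kernel estimate already carried out for the non-projected sliced propagator $\widetilde{C}_i$ in \cite{tensor_4d}, and the only new ingredient is to retain the projector integration $\int \extd \lambda$ as a prefactor on the right-hand side of the final inequality.

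For the derivative bound \eqref{deriv_bound}, differentiating the $\ell$-th heat kernel produces an extra linear factor $(\theta_\ell - \theta_\ell' + \lambda - 2\pi n)/(2\alpha)$ against each Gaussian in the winding sum; using $|x| e^{-x^2/(4\alpha)} \leq C'' \sqrt{\alpha} \, e^{-\delta |x|/\sqrt{\alpha}}$ (after slightly shrinking $\delta$) one absorbs this linear factor at the cost of an overall $\alpha^{-1/2} \sim M^i$, upgrading $M^{(d-2)i}$ to $M^{(d-1)i}$. The main obstacle is essentially bookkeeping: ensuring that the constants $K$ and $\delta$ can be chosen uniformly across all slices $i \geq 0$, with the large-$\alpha$ end of the first slice $i = 0$ controlled by the mass damping $e^{-\alpha m^2}$ for $m > 0$, and the compactness of $\U(1)$ bounding the $\lambda$-integration by $2\pi$; no conceptual difficulty is anticipated beyond these standard heat-kernel manipulations on a compact Lie group.
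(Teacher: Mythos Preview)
Your argument is correct and matches the paper's approach: the paper does not give an independent proof but simply states that the bounds ``have been computed in the model of \cite{tensor_4d}, and immediately imply'' the proposition, the only novelty being that the gauge-averaging integral $\int \extd\lambda$ is carried along unevaluated on both sides. You have supplied precisely the heat-kernel estimate from \cite{tensor_4d} (winding representation, Gaussian-to-exponential bound, slice width $\sim M^{-2i}$, extra $\alpha^{-1/2}$ from the derivative) and correctly identified the $\lambda$-integral as the sole new bookkeeping item, so there is nothing to add.
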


The bound on the derivative of $C_i$ is generalizable to any number of derivatives, but we will only use the one we have just stated. For Abelian compact Lie groups of dimension $D$, $\theta$'s and $\lambda$ are $D$-dimensional and, for example, the first bound becomes
 \beq\label{propa_boundD} 
 C_i (\vec \theta_1 , \dots , \vec\theta_d ; \vec\theta_1' , \dots , \vec\theta_d') \leq K M^{(dD-2) i } \int \extd \vec\lambda \,
 \e^{- \delta M^{i} \sum_e |\vec\theta_e - \vec\theta_e' + \vec\lambda| }\,.
 \eeq

\subsection{Power-counting}
 
 \subsubsection{Power-counting in a slice}
  
The divergence degree of Abelian TGFT subgraphs in a single slice and
 with a heat kernel regularization has been 
 established and analyzed in \cite{lin}. For an Abelian compact group of dimension $D$ it gives
 \beq  \omega (\cH)  = -2 L (\cH) + D(F(\cH) - r(\cH)) 
 \eeq
where $r$ is the rank of the $\epsilon_{ef}$ incidence matrix of $\cH$. The $\theta$ integrations transform the $(dD-2)L$ 
into the $-2 L + D F$ term, whence the $\lambda$ integrals (absent in \cite{tensor_4d}) add the $r$ term. 

The factor $-2L$ is independent of both $D$ and $d$ for a $( m^2 - \sum_{\ell = 1}^{d}
\Delta_\ell )^{-1}$
%$(-\Delta_G)^{-1}$ 
propagator, where $\Delta_\ell$ is the group Laplacian acting on the $\ell$-th argument of the field. 
Indeed it just reflects the asymptotic quadratic decay ``$1/p^2$" of that propagator at large momentum $p$.

If the subgraph $\cH$ is the union of several connected components $\cH_k$, the divergence degree
factorizes as the sum of the divergence degrees of the connected components, from our very definition of connectedness
as rectangular block-factorization of the $\epsilon_{ef}$ incidence matrix:
\beq  \omega (\cH)  = \sum_k \omega (\cH_k)  .
\eeq

In the case of non-commutative TGFT's the ordering of faces results in a more subtle 
single-slice power-counting, established in \cite{valentinmatteo}. The factor $F-r$ is still multiplied by the dimension of the Lie group; 
but in the case of non-commutative groups and of graphs triangulating non-simply connected pseudo-manifolds,
the rank has to be supplemented by another term, not necessarily proportional to the dimension 
of the group, resulting in a \emph{twisted} divergence degree $\omega_t$ \cite{valentinmatteo}.

\subsubsection{Multi-scale power-counting}
  
Consider a graph $\cG$.  Consider the multi-scale decomposition  $\cA_\cG = \underset{\mu}{\sum}  \cA_{\cG, \mu}$.
The multislice power-counting is a bound that at fixed momentum attribution $\mu$
factorizes over all the $ \cG_{i}^{(k)}$ nodes of the  Gallavotti-Nicol\'o tree (hereafter GN tree).

 \begin{proposition}[Multi-scale fundamental bound]
  
There exists a constant $K$ such that the following bound holds:
\beq\label{fund}
 \vert \cA_{\cG, \mu}  \vert   \le K^{L(\cG)}  \prod_i \prod_{ k \in \llbracket 1 , k(i) \rrbracket } M^{\omega [  \cG_{i}^{(k)}]}
\eeq
\end{proposition}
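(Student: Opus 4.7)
The plan is to carry out a standard Gallavotti--Nicol\'o multi-scale analysis, adapted to the gauge-invariant TGFT setting. The starting point is to replace each sliced propagator $C_{i_e}$ in the integral representation of $\cA_{\cG, \mu}$ by the explicit bound (\ref{propa_boundD}). This extracts a prefactor $M^{(dD - 2) i_e}$ per line and reduces the integrand to a product of sharp exponential decays in the differences $\theta_\ell - \theta'_\ell + \lambda_e$ along each line, together with residual integrations over the vertex variables and the gauge-averaging variables $\lambda_e$.

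The next step is to perform these remaining integrations along the face structure. For each closed face $f$ of color $\ell$, the decay factors form a cyclic chain in the vertex variables; integrating all but one of them produces a volume factor of order $\prod_{e \in \partial f} M^{-D i_e} \cdot M^{D i_f^*}$, with $i_f^* = \min_{e \in \partial f} i_e$, and leaves a residual effective constraint $e^{-\delta M^{i_f^*} |\sum_{e \in \partial f} \epsilon_{ef} \lambda_e|}$ on the $\lambda$'s. Since each internal line belongs to $d$ closed faces, the total $\theta$-volume gain per line is $M^{-D d\, i_e}$; combined with the propagator prefactor $M^{(dD-2)i_e}$, this simplifies to $M^{-2 i_e}$. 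Together with the face factors, the net contribution at this stage is $\prod_e M^{-2 i_e} \cdot \prod_f M^{D i_f^*}$.

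The residual $\lambda_e$-integration against the face constraints then produces $\prod_i M^{-D r(\cG_i)}$: by a matroid/flag argument, a basis of face constraints can be chosen compatibly with the scale filtration $\cG_0 \supset \cG_1 \supset \cdots$, so that exactly $r(\cG_i) - r(\cG_{i+1})$ independent constraints become active at scale $i$, each contributing $M^{-D i}$, and scale-by-scale Abel summation yields the stated product. Assembling the three contributions and applying the scale-slicing identities $\sum_e i_e = \sum_i L(\cG_i)$ and $\sum_f i_f^* = \sum_i F(\cG_i)$, one obtains
\[
|\cA_{\cG,\mu}| \leq K^{L(\cG)} \prod_i M^{-2 L(\cG_i) + D(F(\cG_i) - r(\cG_i))} = K^{L(\cG)} \prod_i M^{\omega(\cG_i)},
\]
and the proof is completed by additivity $\omega(\cG_i) = \sum_k \omega(\cG_i^{(k)})$, which is immediate from the $\epsilon_{ef}$-block definition of connectedness since $L$, $F$, and $r$ are all block-additive under this decomposition.

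The main technical obstacle is the rank contribution: one must justify that the closed-face constraints on the $\lambda_e$'s can indeed be chosen compatibly with the scale filtration to yield exactly $\prod_i M^{-D r(\cG_i)}$. This step is genuinely new compared with the analysis of \cite{tensor_4d}: it is the price paid for the gauge-invariance projection, and it is also where our notion of connectedness based on the $\epsilon_{ef}$-block structure---rather than the naive vertex-line incidence---plays its crucial role, by making the rank cleanly additive over the connected components $\cG_i^{(k)}$ of the Gallavotti--Nicol\'o tree.
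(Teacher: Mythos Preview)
Your proposal follows essentially the same three-step architecture as the paper's proof: extract the propagator prefactors $M^{(dD-2)i_e}$, integrate the $\theta$-variables face by face to produce the $-dDL+DF$ contribution together with residual constraints $e^{-\delta M^{i_f^*}|\sum_e\epsilon_{ef}\lambda_e|}$, and finally integrate the $\lambda_e$'s against a scale-compatible family of face constraints to harvest the rank factor. Your ``matroid/flag argument'' is exactly the abstract form of what the paper does concretely: there the set $F_\mu$ is built recursively from the leaves of the Gallavotti--Nicol\'o tree towards the root, discarding at each node the closed-face columns that are linear combinations of previously selected ones, so that $|F_\mu\cap F(\cG_i^{(k)})|=r_{i,k}$ at every node. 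Your Abel-summation repackaging and the paper's $\prod_{j\le i}M$ telescoping are equivalent bookkeeping.

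Two places where the paper is more careful than your sketch. First, the sentence ``each internal line belongs to $d$ closed faces'' is only true for vacuum graphs; in a non-vacuum $\cG$ some strands sit in open (external) faces, so the naive count of $dL$ $\theta$-integrations is off. The paper sidesteps this by the neutral phrasing ``integrate all $\theta$ variables in any face, optimizing along a tree in each face as in \cite{tensor_4d}'', which covers open faces as well; you should not rely on the closed-face assumption in your accounting. Second, having chosen the face basis $F_\mu$, one still has to actually perform the $\lambda$-integrals. The paper does this by selecting a matching set of \emph{lines} $L_\mu$ with $|L_\mu|=|F_\mu|$ such that the square minor $(\epsilon_{ef})_{e\in L_\mu,\,f\in F_\mu}$ is invertible, arguing that this minor still has rank $r_{i,k}$ in each GN node, changing variables $\lambda_e\mapsto x_f$, and bounding the Jacobian by Hadamard's inequality (each column has at most $d$ entries $\pm1$). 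Your phrase ``each contributing $M^{-Di}$'' is the right endpoint but hides these steps; in particular the Jacobian bound and the compatibility of $L_\mu$ with the node ranks are what make the change of variables harmless in the multi-scale sense.
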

\begin{proof}
In the ordinary case of an ordinary connected graph with $L$ lines and $V$ vertices, the slice propagator bound is 
\beq
\vert C_i ( x, y )  \vert \leq K M^{(D-2) i } \e^{- \delta M^{i} \vert x-y \vert }\,,
\eeq
and the divergence degree is $\omega = (D-2)L - D r$, where $r$, the rank of the $\epsilon_{ev}$ matrix, is $V-1$ since the graph is connected. 
The multi-scale aspect of the bound requires to optimize the rank effects at each scale. 
It follows conveniently from the compatibility between two trees:
the GN tree at fixed assignment $\mu$ and a real spanning tree $T_\mu$ made of lines of the graph $\cG$ which allows to 
organize in an optimal way the integration over the vertex positions \cite{VincentBook}. More precisely,
it is always possible to require $T_\mu$ to be a subtree when restricted to each GN node. This is because
the GN nodes form an inclusion forest; $T_\mu$ is chosen recursively from leaves towards the root $G$
of the GN tree. One first picks a spanning tree in a leaf of the GN tree, then contracts that leaf to a vertex and continue
until the end, obtaining a set of lines $T_\mu$. The inclusion structure of the GN tree implies that any such $T_\mu$
is a tree in the full graph whose restriction to each node is also a tree in that node. 
Then one can forget the useless decay factors associated to the lines not in $T_\mu$;
they cannot improve the final bound except at the level of the constant $K$ at best. 

But the reduced incidence matrix for $T_\mu$ is still $V-1$ by $V$ and has still rank $V-1$; there is therefore still ``root" vertices to
choose in the total graph $\cG$ in order to obtain minors of maximal rank. This is done again recursively, but this time in the reverse order.
One can pick an arbitrary root vertex $v_0$ in $\cG$. It determines in each node a unique ``local" root vertex $v_{ik}$, which is either
$v_0$ if it belongs to the node, or the starting vertex on the unique path of $T_\mu$ leading out of the node to the root $v_0$.

Then performing the ``canonical" change of variables of Jacobian 1 associated to $T_\mu$ and $V_0$ from leaves of $T_\mu$ to the root, 
we can integrate all positions of all vertices of $G$ save $V_0$ through the decay $$\prod_{e \in T_\mu} \e^{- \delta M^{i(e)} \vert x_e-y_e \vert }$$
and the result gives $ K^{V-1}\prod_i \prod_{ k \in \llbracket 1 , k(i) \rrbracket } M^{-D [V ( \cG_{i}^{(k)})-1]}$ as desired.

In the tensorial group field theory case we proceed in the same way to combine the bound and the optimization over the scales. 
First we collect all lines factors $M^{2i}$ and rewrite them as
$$\prod_i \prod_{k  \in \llbracket 1 , k(i) \rrbracket } M^{(d D -2) L [  \cG_{i}^{(k)}]}$$
through the usual trivial identities $M^i = \prod_{j=1}^i M$.
Then we integrate all $\theta$ variables in any face, optimizing along a tree in each face
as in \cite{tensor_4d}. This results in a factor 
$$ K^{L(\cG)}   \prod_i \prod_{ k \in \llbracket 1 , k(i) \rrbracket } M^{-d D L ( \cG_{i}^{(k)})  + DF( \cG_{i}^{(k)})  }$$
which combined with the first one gives 
$$ K^{L(\cG)}   \prod_i \prod_{ k \in \llbracket 1 , k(i) \rrbracket } M^{-2 L ( \cG_{i}^{(k)})  +  DF( \cG_{i}^{(k)})  } .$$

It remains to perform the $\lambda$ integrals, using the remaining decay, which is
\beq \prod_f \e^{- \delta M^{i(f)} \vert \sum_e \epsilon_{ef} \lambda_e \vert}, 
\label{facetree}\eeq
where $i(f)$ is the lowest 
scale in the face $f$. These integrals should give the rank contribution to $\omega$. 
But how to optimize this effect according to the scale attribution $\mu$?
By analogy with the previous case we should select a restricted set of faces $F_\mu$ 
such that the submatrix $\epsilon_{ef}$ with $f$ 
restricted to $F_\mu$ still has rank $r_{i,k}$ in each $ \cG_{i}^{(k)}$ node, and forget the decay factors from the other faces in \eqref{facetree}. 
This is the analog of selecting the former spanning tree $T_\mu$, and throwing the loop lines decays.

To select $F_\mu$, we start again from the leaves of the GN tree and proceed towards its root $\cG$.
We consider a leaf $\cH$ and select a first subset of faces such that the restricted submatrix $\epsilon_{ef}$ with $f$
and $e$ in $\cH$ has maximal rank; then we \emph{contract} $\cH$ and continue the procedure for the reduced
graph and the reduced GN tree, until the root is reached. At the end we obtain a particular set of faces $F_\mu$.

At each node $ \cG_{i}^{(k)}$ we have discarded the full incidence columns for internal faces 
which were combinations of other columns of that node. But because such faces were internal, these full columns have zeros 
outside the $ \cG_{i}^{(k)}$ block. Hence removing them cannot have any effect on the lower nodes rank. The conclusion is again
that the incidence matrix reduced to $F_\mu$, that is for which all internal faces not contained in $F_\mu$ have been discarded,
has still rank $r_{i,k}$ in each  $\cG_{i}^{(k)}$ node.

Discarding the decay factors for faces not in $F_\mu$, we now need to analyze the result of the integral 
\begin{equation} 
\int \prod_{e \in L(\cG)} d^D \lambda_e   \prod_{f \in F_\mu} \e^{- \delta M^{i(f)} \vert \sum_e \epsilon_{ef} \lambda_e \vert}, 
\end{equation}
and prove that it gives $\prod_i \prod_{ k \in \llbracket 1 , k(i) \rrbracket } M^{-D r_{i,k}}$.
This is the analog of the variables change and choice of the root vertices in each node. 
In the graph $\cG$ we can pick a set $L_\mu$ of  exactly $\vert F_\mu \vert $ \emph{lines} such that the (square)  $F_\mu$ by $L_\mu$ minor 
$\epsilon_{ef}$ has non-zero determinant. The $L_\mu$ by $F_\mu$ square incidence matrix 
$\epsilon_{ef}$ must still have exactly $r_{i,k}$ rank in each $\cG_{i}^{(k)}$ node (otherwise the $r_{i,k}$ columns $\epsilon_{ef}$ for $f \in F_\mu \cap \cG_{i}^{(k)}$
would not generate a space of dimension $r_{i,k}$, and the rank of the selected $F_\mu$ by $L_\mu$ square matrix would be strictly smaller than $F_\mu$).

We can now fix all values of the $\lambda_e$ parameters of the lines not in $L_\mu$ and consider the integrals 
\beq \int \prod_{e \in L_\mu} d^D \lambda_e \prod_{f \in F_\mu} \e^{- \delta M^{i(f)} \vert \sum_e \epsilon_{ef} \lambda_e \vert}, 
\eeq
We change of variables so that the integral becomes
\beq \int J \prod_{f \in F_\mu} d^D x_f  \e^{- \delta M^{i(f)} \vert  x_f - y_f \vert}, 
\eeq
where the $y_f$ variables are functions of the fixed $\lambda_e$ parameters of the lines not in $L_\mu$ and $J$ is a Jacobian.
This integral gives $\prod_{f \in F_\mu}  M^{-D i(f)} $, which by the condition on $F_\mu$ turns into 
$\prod_i \prod_{ k \in \llbracket 1 , k(i) \rrbracket } M^{-D r_{i,k}}$
as expected.

Remark that by Hadamard bound, since each column of this determinant is made of at most
$d$ factors $\pm 1$ (a line containing at most $d$ internal faces), the Jacobian $J$ of the corresponding change of variables 
is not very big, at most $\sqrt{d} ^{F_\mu}$, hence can be absorbed in the $K^{L(\cG)} $ factor. 

Finally we can integrate the fixed $\lambda_e$ parameters 
for $e \not \in L_\mu$ at a cost bounded by $K^{L (\cG)}$ if our group is compact. The case of a non-compact group 
requires infrared regularization and will not be treated here.

\end{proof}

 %!TEX root = ../pphi.tex
 
% u1_model

\section{Application to the $\U(1)$ 4d model.}

In this section, we illustrate the general formalism outlined in the previous sections, specializing to $d = 4$ and $G = \U(1)$. We will use angle coordinates $\theta_\ell \in \left[ 0 , 2 \pi \right[$ for the group elements 
$g_\ell = \e^{\rm{i} \theta_\ell}$, parameterizing the field $\vphi(\theta_1 , \dots , \theta_4)$.

These Abelian models will turn out to be very similar to scalar models with polynomial interactions in two-dimensional ordinary quantum field theory \cite{Simon}. The latter are super-renormalizable for any interaction, and the divergences can be subtracted by a simple change of variables at the level of the action. This procedure, called Wick ordering, removes the divergent tadpole contributions, yielding a perturbatively finite theory. 
In our TGFT context, local and polynomial interactions are replaced by finite sums of connected tensor invariants. Analogously to $P(\phi)_2$ models, we will prove that any such interaction generates a super-renormalizable model. We will then provide a generalization of the Wick ordering procedure, again yielding perturbatively finite models.  

  \subsection{Bound on the divergence degree}

Since $D = 1$, the divergence degree of a connected subgraph $\cH \subset \cG$ is given by
\beq
\omega (\cH) = - 2 L(\cH) + F(\cH) - r(\cH).
\eeq
We need to determine the set of divergent subgraphs, that is those $\cH$ such that $\omega (\cH) \geq 0$. In order to prove the model renormalizable, it will also be necessary to find a uniform decay of the amplitude associated to convergent graphs ($\omega < 1$), with respect to their external legs. In this respect, a suitable bound on $\omega$ in terms of simple combinatorial quantities will be sufficient. We can for instance decompose the number of faces with respect to the number of lines they consist of. We call $F_k$ the number of internal faces with $k$ lines, and $F_{ext,k}$ the number of external faces with $k$ lines, so that:
\beq
F = \sum_{k \geq 1} F_k\,, \qquad F_{ext} = \sum_{k \geq 0} F_{ext, k}\,.
\eeq
Note that, contrary to internal faces, external faces of $\cH$ do not have to contain a line of $\cH$, which explains that the second sum starts from $k=0$. We can also express the number of lines in terms of these quantities. Since $4$ different faces run through each line of $\cH$, we have:
\beq
4 L = \sum_{k \geq 1} k F_k + \sum_{k \geq 1} k F_{ext, k}\,,
\eeq
where in this formula both sums start with $k = 1$. We can therefore rewrite $\omega$ as
\beq
\omega = \sum_{k \geq 1} \left( 1 - \frac{k}{2} \right) F_k - \sum_{k \geq 1} \frac{k}{2} F_{ext, k} - r \,.
\eeq

We remark that the only positive contribution in this sum is given by $F_1$, to which only $p$-dipoles with $p \geq 2$ contribute. More precisely, 
\beq
F_1 = D_2 + 2 D_3 + 3 D_4 + 4 D_5\,,
\eeq
where $D_p$ is the number of $p$-dipole lines in $\cH$. We are thus lead to find a bound on $r$ in terms of these numbers of dipoles, which is the purpose of the following lemma.
\begin{lemma}
The rank of the incidence matrix associated to a connected graph $\cH$ verifies:
\beq
r \geq D_2 + D_3 + D_4 + D_5\,.
\eeq
\end{lemma}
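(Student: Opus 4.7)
The plan is to exploit the bigon (length-one) internal faces attached to dipoles of order $p \geq 2$. First, I would recall that whenever a line $e$ of $\cH$ is a $p$-dipole with $p \geq 2$, the colored extension contains $p-1$ colored lines joining the two endpoints of $e$, and each of these colored lines together with $e$ closes into an internal face of length one, i.e., a face passing only through the single color-$0$ line $e$. In particular the column of the incidence matrix $\epsilon_{e'f}$ associated to any such bigon face $f$ has a single non-zero entry (equal to $\pm 1$), located in the row $e'=e$.

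Next, for every line $e$ of $\cH$ which is a $p$-dipole with $p\in\{2,3,4,5\}$, I would pick exactly one of its bigon faces, call it $f(e)$, and consider the family of columns $\{\epsilon_{\,\cdot\, ,\,f(e)}\}$ so obtained. By the observation of the previous paragraph, each such column is (up to sign) the standard basis vector supported at the row $e$, and different $e$ give different supports. Hence these columns are linearly independent in the column space of $\epsilon_{ef}$. Their number equals the total count of $p$-dipole lines with $p\geq 2$, namely $D_2+D_3+D_4+D_5$, and since the rank is bounded below by the size of any linearly independent subfamily of columns, we obtain
\[
r \;\geq\; D_2+D_3+D_4+D_5 .
\]

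I do not expect any serious obstacle: the content is essentially the remark that bigons arise exclusively from higher-order dipoles and that the incidence columns they generate have pairwise disjoint supports. Note that connectedness of $\cH$ plays no role in this rank estimate; it will only be used later, when combining the lemma with the face/line counts to bound $\omega$. The bound is clearly not tight in general (cycles of longer faces also contribute to $r$), but it is exactly what is needed to cancel the positive $F_1$ contribution in the expression of the divergence degree.
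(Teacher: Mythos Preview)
Your proposal is correct and follows essentially the same approach as the paper: pick one length-one internal face per $p$-dipole line with $p\geq 2$, observe that its incidence column is supported on that single line, and conclude linear independence from the pairwise disjoint supports. The paper's proof is simply the one-sentence version of your argument.
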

\begin{proof}
Each $p$-dipole with $p \geq 2$ contains at least one internal face, which is independent of all the faces appearing in other lines. 
\end{proof}

Plugging this inequality into the expression of $\omega$ yields the following bound:
\beq\label{bound_om}
\omega \leq D_5 + \frac{D_4}{2} - \frac{D_2}{2} - \sum_{k \geq 3} \left( \frac{k}{2} - 1 \right) F_k - \sum_{k \geq 1} \frac{k}{2} F_{ext, k}\,.
\eeq
Note also that $D_5$ is always $0$, unless $\cH$ is the unique vacuum graph with a single line (sometimes called \textit{supermelon}). So the only non-trivial positive contribution comes from the $4$-dipoles. This seems to suggest that only melopoles will be convergent, which we confirm in the next section.

  \subsection{Divergent and convergent graphs}

To control the contribution of $D_4$ in (\ref{bound_om}), we take a step back and analyse the (exact) effect on $\omega$ of a $4$-dipole contraction in a connected graph $\cH$. Because $4$-dipoles are tracial, the question is well-posed.

\begin{proposition}
Let $\cH$ be a connected subgraph, and $l$ a $4$-dipole line. Then
\beq
\omega(\cH) = \omega(\cH / l)\,.
\eeq
\end{proposition}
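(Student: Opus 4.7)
The plan is to track the three quantities $L$, $F$, and $r$ entering the power-counting formula $\omega(\cH) = -2L(\cH) + F(\cH) - r(\cH)$ (the $D = 1$ case derived earlier) as we pass from $\cH$ to $\cH' := \cH / l$, and verify that the changes cancel. The line count is immediate: $L(\cH) - L(\cH') = 1$. For the face count, note that since $l$ is a $4$-dipole in $d = 4$, the single-line subgraph $\{l\}$ has exactly $d-1 = 3$ internal (bigon) faces, one for each color of the $d-1$ parallel colored lines of the dipole. By the characterisation of contraction (Proposition~2), these three faces are deleted, while the fourth face $g$ through $l$ (of the remaining color, closed or external in $\cH$) is an external face of $\{l\}$ and is merely shortened. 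Hence $F(\cH) - F(\cH') = 3$.

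The key step is to show $r(\cH) - r(\cH') = 1$, which I would handle by direct linear algebra on $\epsilon_{ef}(\cH)$. Label the three bigon faces by $b_1, b_2, b_3$. Each $b_j$ is bounded by $l$ and one colored line, so its column in $\epsilon_{ef}(\cH)$ has a unique non-zero entry $\pm 1$ in row $l$; in particular the three bigon columns are proportional to the coordinate vector $e_l$. Performing column operations that do not change the rank, I would subtract appropriate signed multiples of $b_1$ from $b_2$, $b_3$, and, if $g$ is closed, from the $g$-column, killing their $l$-entries. After this operation row $l$ has its only non-zero entry in column $b_1$, columns $b_2$ and $b_3$ are identically zero, and the non-$l$ entries of the modified $g$-column match exactly the entries of the corresponding shortened-face column of $\epsilon_{ef}(\cH')$ (since the non-$l$ incidences of $g$ survive contraction, no tadface being created thanks to the colored structure). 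The matrix therefore decomposes as a rank-$1$ block (row $l$ paired with column $b_1$) plus a complementary block equal, up to the irrelevant zero columns, to $\epsilon_{ef}(\cH')$, giving $r(\cH) = 1 + r(\cH')$.

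Combining the three identities,
\[
\omega(\cH') - \omega(\cH) = -2(-1) + (-3) - (-1) = 0,
\]
which proves the proposition. The main obstacle is the rank step, and within it the subtle point is the case where $g$ is closed: one must check that the $b_1$-subtraction on the $g$-column reproduces exactly the $g'$-column of $\cH'$, rather than some spurious quantity. This relies on the fact that $g$ passes through $l$ exactly once (no tadface, by the colored structure) and that the other color-$0$ incidences of $g$ are untouched by the contraction, so the only effect of the contraction on that face is the removal of the $l$-entry—precisely what the column operation achieves.
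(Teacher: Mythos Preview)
Your proof is correct and follows exactly the same approach as the paper's own proof: track the changes $\Delta L = -1$, $\Delta F = -3$, $\Delta r = -1$ under the $4$-dipole contraction and observe that $\omega = -2L + F - r$ is unchanged. The paper simply asserts the rank identity (``it is easy to see that $r(\cH/l) = r(\cH) - 1$''), whereas you supply the explicit linear-algebra justification via column operations on $\epsilon_{ef}$; that added detail is sound and the subtle point you flag about the closed face $g$ is handled correctly.
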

\begin{proof}
We have immediately $L (\cH / l) = L(\cH) - 1$ and $F (\cH / l) = F(\cH) - 3$. As for the rank of the incidence matrix, it is easy to see that: $r(\cH / l) = r(\cH) - 1$. Therefore:
\beq
\omega(\cH / l) = \omega(\cH) + 2 - 3 + 1 = \omega( \cH )\,.
\eeq
\end{proof}

This property can be used to recursively reduce the analysis to that of graphs with a few melonic lines. For such graphs, (\ref{bound_om}) is constraining enough, and we can obtain the following classification.
\begin{proposition}\label{fund_u1}
Let $\cH \subset \cG$ be a connected subgraph.
\begin{itemize}
\item If $\omega(\cH) = 1$, then $\cH$ is a vacuum melopole.
\item If $\omega(\cH) = 0$, then $\cH$ is either a non-vacuum melopole, or a \textit{submelonic vacuum graph} (see Figure \ref{submelonic}). 
\item Otherwise, $\omega(\cH) \leq -1$ and $\omega(\cH) \leq - \frac{N(\cH)}{4}$.
\end{itemize}
\end{proposition}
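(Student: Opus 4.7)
The plan is to combine the single bound \eqref{bound_om} with the preceding invariance $\omega(\cH)=\omega(\cH/l)$ for $l$ a $4$-dipole line, reducing the classification to a case analysis on a maximally reduced subgraph $\widetilde{\cH}$ obtained by iteratively contracting $4$-dipoles in $\cH$. Three facts drive the argument: (i) the contraction of a $4$-dipole preserves $\omega$, connectedness (Proposition~\ref{disconnected}(i)) and the number of external legs, so $N(\widetilde{\cH})=N(\cH)$ and $\omega(\widetilde{\cH})=\omega(\cH)$; (ii) the coefficient $D_5$ vanishes on any connected subgraph other than the supermelon, because a $5$-dipole saturates all $d+1$ colors at its two endpoint nodes and forbids any further color-$0$ line to attach; (iii) at the end of the reduction one has $D_4(\widetilde{\cH})=D_5(\widetilde{\cH})=0$ unless $\widetilde{\cH}$ itself is the supermelon.

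If $\cH$ contains a $5$-dipole at the start, fact (ii) forces $\cH$ to be the supermelon and a direct count ($L=1$, $F=4$, $r=1$, $N=0$) gives $\omega=1$, placing $\cH$ in the vacuum melopole class. Otherwise I iterate $4$-dipole contractions: if at any step a $5$-dipole appears, fact (ii) shows the intermediate graph is the supermelon, so $\cH$ reduces to the supermelon through a sequence of $4$-dipole contractions and is by definition a vacuum melopole with $\omega(\cH)=1$. In all remaining cases the reduction terminates on some $\widetilde{\cH}$ with $D_4=D_5=0$, and \eqref{bound_om} yields
\[
\omega(\cH)=\omega(\widetilde{\cH})\;\leq\;-\frac{D_2(\widetilde{\cH})}{2}-\sum_{k\geq 3}\Bigl(\frac{k}{2}-1\Bigr)F_k(\widetilde{\cH})-\sum_{k\geq 1}\frac{k}{2}F_{ext,k}(\widetilde{\cH})\;\leq\;0\,.
\]
Equality forces $D_2(\widetilde{\cH})=0$, $F_k(\widetilde{\cH})=0$ for $k\geq 3$ and $F_{ext,k}(\widetilde{\cH})=0$ for all $k\geq 1$. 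Since every external face passes through at least one line of $\widetilde{\cH}$, the last condition splits into two subcases: either $\widetilde{\cH}$ has no line at all (a bare vertex still carrying the $N(\cH)$ external legs, so $\cH$ is a melopole, non-vacuum when $N(\cH)>0$), or $\widetilde{\cH}$ has lines but is vacuum, with only $D_1$ and $D_3$ contributions and only length-$1$ and length-$2$ internal faces, which I identify with the submelonic vacuum graphs of Fig.~\ref{submelonic}. This settles the $\omega=1$ and $\omega=0$ alternatives.

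For the strictly convergent range $\omega\leq -1$, the additional bound $\omega\leq -N(\cH)/4$ follows from a face-ending count in $\widetilde{\cH}$: each of the $N(\widetilde{\cH})=N(\cH)$ external legs attaches to a node with $d=4$ colored edges and therefore contributes $4$ endings to external faces, while each external face has exactly $2$ endings; hence $F_{ext}(\widetilde{\cH})=2N(\cH)$. Since $\omega\leq -1$ rules out the bare-vertex subcase, $\widetilde{\cH}$ has at least one line and
\[
\omega(\cH)\;\leq\;-\sum_{k\geq 1}\frac{k}{2}F_{ext,k}(\widetilde{\cH})\;\leq\;-\frac{F_{ext}(\widetilde{\cH})}{2}\;=\;-N(\cH)\;\leq\;-\frac{N(\cH)}{4}\,.
\]
The main difficulty I anticipate is the careful bookkeeping of the iterated reduction: one must check that at each step $\widetilde{\cH}$ stays connected (handled by Proposition~\ref{disconnected}(i)), that the replacement of external faces by single colored lines under dipole contractions respects the external-leg count $N$, and that the $5$-dipole alternative during the reduction is correctly recognized as the supermelon and not some spurious configuration; this is exactly where the conjunction of $\omega$-invariance under $d$-dipole contraction and the rigid combinatorics of $(d+1)$-dipoles does the essential work.
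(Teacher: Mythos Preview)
Your overall strategy coincides with the paper's: reduce by iterated $4$-dipole contractions (preserving $\omega$, connectedness and $N$), then apply the bound \eqref{bound_om} to the reduced graph $\widetilde{\cH}$. However, two steps in your execution do not go through as written.

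\textbf{The external-face bound.} Your chain
\[
\omega(\cH)\;\leq\;-\sum_{k\geq 1}\frac{k}{2}F_{ext,k}(\widetilde{\cH})\;\leq\;-\frac{F_{ext}(\widetilde{\cH})}{2}\;=\;-N(\cH)
\]
fails at the middle inequality: $F_{ext}=\sum_{k\geq 0}F_{ext,k}$ includes the $k=0$ term, and the paper explicitly allows $F_{ext,0}>0$ (external faces need not contain a line of the subgraph). Your identity $F_{ext}=2N$ is fine, but $\sum_{k\geq 1}kF_{ext,k}\geq F_{ext}$ would require $\sum_{k\geq 2}(k-1)F_{ext,k}\geq F_{ext,0}$, which is false in general (take $\widetilde{\cH}$ with only $k=0$ and $k=1$ external faces). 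The paper's argument is different and weaker but correct: connectedness guarantees that each external leg meets at least one external face with $k\geq 1$, and since each such face has two external legs, $\sum_{k\geq 1}F_{ext,k}\geq N/2$, whence $\omega\leq -N/4$. You should replace your counting by this connectedness argument.

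\textbf{The $\omega=0$ dichotomy.} Your claim ``every external face passes through at least one line of $\widetilde{\cH}$'' is precisely $F_{ext,0}=0$, which is not granted. Worse, combining it with $F_{ext,k}=0$ for $k\geq 1$ gives $F_{ext}=0$, hence $N=0$; so your ``bare vertex with $N(\cH)>0$ external legs'' subcase cannot arise from that reasoning. The paper avoids this by treating the non-vacuum case separately and stopping the reduction one step earlier: either $\widetilde{\cH}$ is a \emph{single $4$-dipole line} (then $\omega=-2+3-1=0$ and $\cH$ is a non-vacuum melopole), or $D_4(\widetilde{\cH})=0$ and the presence of at least one $k\geq 1$ external face forces $\omega<0$. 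For the vacuum $\omega=0$ branch, you also simply assert the identification with submelonic graphs; the paper instead observes that $D_2=0$, $F_k=0$ for $k\geq 3$ leave only finitely many candidate reduced structures and checks $\omega$ on each by direct computation. You need that verification, since \eqref{bound_om} is only an inequality.
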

\begin{proof}
Let us first assume that $\cH$ is a vacuum graph. We can perform a maximal set of successive $4$-dipole contractions, so as to obtain a graph $\widetilde{\cH}$ with $D_4 = 0$ and same power-counting as $\cH$. If $D_5 (\widetilde{\cH}) = 1$, then $\widetilde{\cH}$ is the supermelon graph, which means that $\cH$ is a melopole, and $\omega(\cH) = \omega(\widetilde{\cH}) = - 2 + 4 - 1 = 1$. On the other hand, when $D_5 (\widetilde{\cH}) = 0$, equation (\ref{bound_om}) gives
\beq
\omega( \widetilde{\cH} ) \leq  - \frac{D_2 (\widetilde{\cH}) }{2} - \sum_{k \geq 3} \left( \frac{k}{2} - 1 \right) F_k ( \widetilde{\cH} )\,,
\eeq
from which we infer that $\omega( \widetilde{\cH} ) \leq -1$ unless perhaps when $D_2 (\widetilde{\cH}) = F_k (\widetilde{\cH} ) = 0$ for any $k \geq 3$. But it is easy to see that these conditions immediately imply that $\widetilde{\cH}$ has one of the structures shown in Figure \ref{2cases}. A direct calculation then confirms that $\omega = 0$ for the left drawing, but $\omega = 1$ for the drawing on the right. This finally shows that $\omega = 0$ graphs are exactly the minimal graph on the left side of Figure \ref{2cases} dressed with additional melopoles, as shown in Figure \ref{submelonic}. We propose to call them \textit{submelonic vacuum graphs}. 

\begin{figure}
\begin{center}
\includegraphics[scale=0.5]{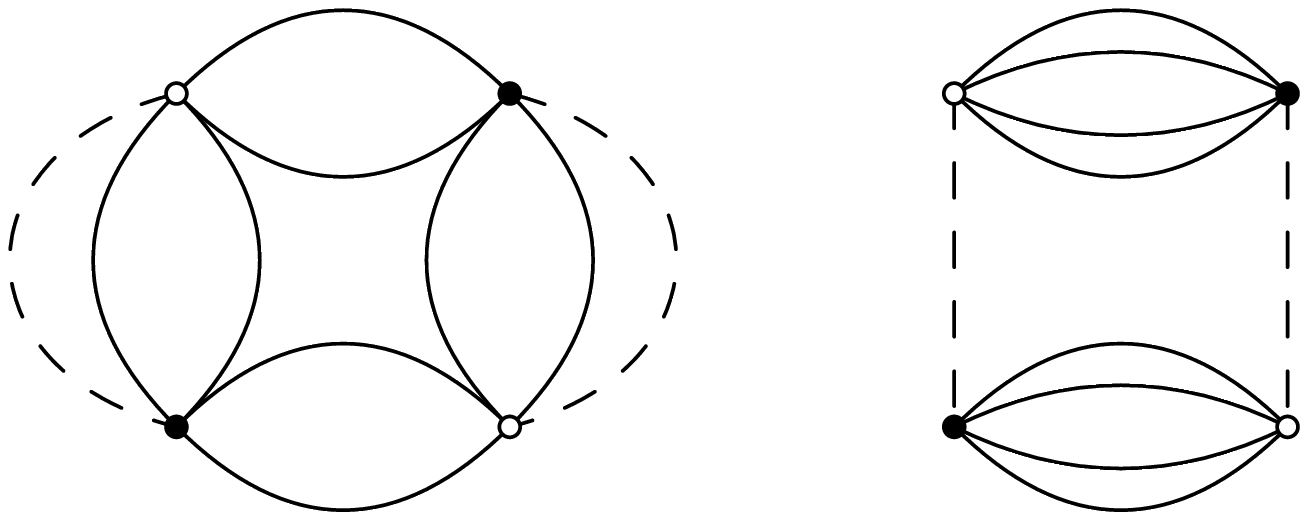}
\caption{Two vacuum graphs with $D_2 (\widetilde{\cH}) = F_k (\widetilde{\cH} ) = 0$ for any $k \geq 3$.}
\label{2cases}
\end{center}
\end{figure}

\
Let us now consider the case of a non-vacuum graph $\cH$. We can again perform a maximal set of $4$-dipole contractions 
and construct a new graph $\widetilde{\cH}$ verifying either: a) $L(\widetilde{\cH}) = D_4 (\widetilde{\cH}) = 1$; or b) $ D_4 (\widetilde{\cH}) = 0 $. In situation a), $\widetilde{\cH}$ reduces to a single $4$-dipole line, $\cH$ itself is a melopole, and $\omega(\cH) = \omega(\widetilde{\cH}) = - 2 + 3 - 1 = 0$. In situation b), the bound on $\omega$ gives
\beq
\omega( \widetilde{\cH} ) \leq  - \frac{D_2 (\widetilde{\cH}) }{2} - \sum_{k \geq 3} \left( \frac{k}{2} - 1 \right) F_k ( \widetilde{\cH} ) - \sum_{k \geq 1} \frac{k}{2} F_{ext , k} (\widetilde{\cH}) < 0\,,
\eeq
which shows that $\omega ( \widetilde{\cH} ) = \omega ( \cH ) \leq - 1$. We can finally prove a decay in terms of the number of external lines. For instance, we remark that the connectedness of $\widetilde{\cH}$ implies that at least one face going through a given external leg is of the type $F_{ext , k}$ with $k \geq 1$. And because each of these faces contains two external legs, we have $\sum_{k \geq 1} F_{ext , k} \geq \frac{N}{2}$. So we finally obtain
\beq
\omega(\cH) = \omega( \widetilde{\cH} ) \leq - \sum_{k \geq 1} \frac{k}{2} F_{ext , k} (\widetilde{\cH}) 
\leq - \frac{1}{2} \sum_{k \geq 1} F_{ext , k} \leq - \frac{N}{4}\,.
\eeq
 
\
All possible situations have been scanned, which ends the proof. 
\end{proof}

This classification allows to identify melopoles as the only source of divergences in the scale decomposition of non-vacuum connected amplitudes. Any model with a finite set of $4$-bubble interactions comes with a finite number of melopoles, and is therefore expected to be super-renormalizable. The purpose of the next sections is to prove that it is indeed the case.

\begin{figure}
\begin{center}
\includegraphics[scale=0.5]{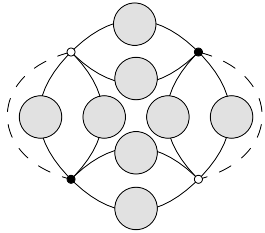}
\caption{The class of submelonic vacuum graphs: grey blobs represent melopole insertions.}
\label{submelonic}
\end{center}
\end{figure}

  \subsection{Melordering}

In the usual super-renormalizable $P(\phi)_2$ field theory \cite{Simon}, the finite set of counter-terms that are needed to tame divergences is simply provided by Wick ordering. It consists in a simple change of basis of interaction invariants, the coupling constants in this new basis being the renormalized ones. The net effect of Wick ordering at the level of the Feynman expansion is to simply cancel the contributions of graphs with tadpoles. This suggests a similar strategy to remove the special kind of tadpoles that are responsible for the divergences of our tensorial model, that is the melopoles. We will call this particular version of Wick ordering the \textit{melordering}.

\
Before going to the details of melordering, a few preliminary remarks are in order. As in the rest of this paper, we have to face a few subtleties introduced by the refined notion of connectedness on which TGFT relies.
In scalar theories, tadpole lines are exactly local objects, in the sense that their contributions can be factorized exactly. This is the reason why Wick ordering can be defined as a choice of a family of orthogonal polynomials with
respect to the regularized covariance. When such invariants are used as a basis to express the interaction part of the action, their expectation values in the vacuum is zero, and more generally all
tadpole contributions cancel out exactly. In tensorial theories however, we have seen that tadpoles can only be approximately local, at the condition of them being tracial (which melopoles are).
We therefore cannot hope to cancel them exactly, but only to eliminate their local divergent part. An important consequence is for example that melordered invariants will not necessarily have zero expectation value in the vacuum, but only a finite one (at the additional condition that submelonic vacuum counter-terms are added when needed, see section \ref{sec:sub}).

\
We now proceed with the definition of melordering.
Let us call $\Inv$ the vector space of connected tensor invariants, generated by the $4$-bubbles. Associated to the regularized covariance $C^{\rho}$, we want to define a linear and bijective map
$\Omega_{\rho}: \Inv \mapsto \Inv$ 
that maps any $4$-bubble to a suitably weighted sum of lower order $4$-bubbles. Getting inspiration from the scalar case, one should define $\Omega_{\rho}(I_{b})$ as a sum over pairings of the
external legs of $b$. The relevant pairings will be those resulting in one or several melopoles. Indeed, as we will see in an explicit example (see Appendix), a single connected invariant can give rise to several disconnected melopoles. For this reason, and despite the super-renormalizable nature of the model, the counter-terms have already a rich structure, only captured by the full machinery of Zimmermann forests. In such an approach, the renormalized amplitudes are given by sums over inclusion forests of divergent subgraphs $\cF$, of contractions of the bare amplitudes
\beq
\cA_\cG^R = \sum_\cF \prod_{\cH \in \cF} (- \tau_\cH ) \cA_\cG \,.
\eeq
In our case, the relevant structure is given by inclusion forests of melopoles, which we call \textit{meloforests} and define with respect to both subgraphs and bubble invariants.
\begin{definition}
\begin{enumerate}[(i)]
\item Let $\cH \in \cG$ be a subgraph. A meloforest $\cM$ of $\cH$ is a set of non-empty and connected melopoles of $\cH$, such that: for any $m , m' \in \cM$, either $m$ and  $m' $ are face and line-disjoint (i.e. have neither common lines nor common faces), 
or $m \subset m'$ or $m' \subset m$. We note $\cM(\cH)$ the set of meloforest of $\cH$.
\item Let $b$ be a $4$-bubble. A meloforest $\cM$ of $b$ is a meloforest for a graph made of a single vertex $b$. We call $I_{b, \rho}^{\cM}$ the observable associated to the smallest such graph, namely $\underset{m \in \cM}{\bigcup} m$. We note $\cM(b)$ the set of meloforests of $b$.
\end{enumerate}
\end{definition}
Meloforests have a relatively simple structure, due to a uniqueness property \cite{uncoloring,universality}. \begin{lemma}
Let $b$ be a $4$-bubble. There exists a unique vacuum graph $\cG$ with a single vertex $b$, such that any meloforest of $b$ is a meloforest of $\cG$.
\end{lemma}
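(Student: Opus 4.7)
The plan is to build $\cG$ by a canonical recursive construction on $b$ and then verify uniqueness directly. I would start by recalling that the innermost (leaf) elements of any meloforest are single-line melopoles, i.e.\ color-$0$ tadpoles whose endpoints are two nodes $n$, $\overline{n}$ of $b$ that, after adding the tadpole, form a $d$-dipole. In the case $d=4$ this means $n$ and $\overline{n}$ are a white/black pair joined in $b$ by exactly three colored lines of the remaining three colors. The colored-graph structure of $b$ determines which white/black pairs have this property, and this determination is unambiguous.

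The construction then proceeds by induction on the number of nodes of $b$. Pick any white/black pair $(n,\overline{n})$ sharing three colored lines of $b$, add the color-$0$ tadpole line $l_{n \overline{n}}$, and contract the resulting $4$-dipole using the procedure of the earlier Definition; this yields a $4$-bubble $b'$ with two fewer nodes. Apply the same procedure to $b'$ to obtain a vacuum single-vertex graph, and lift the result back to $b$ by reinserting the dipole $l_{n\overline{n}}$. This produces a vacuum graph $\cG$ whose color-$0$ tadpole set is, by construction, a melopole (in the sense of the Definition: it admits a Hepp sector in which each successive line is a $4$-dipole). Any meloforest of $b$ — whose melopoles are assembled from exactly such innermost $4$-dipole pairings — therefore embeds into $\cG$.

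For uniqueness, I would observe that the pairings produced by the procedure are forced. If a meloforest of $b$ contains a melopole incident to a given node of $b$, then at its innermost level some color-$0$ tadpole must form a $4$-dipole, and the colored structure of $b$ leaves no freedom in the choice of its other endpoint. Iterating this observation, one sees that any vacuum single-vertex closure of $b$ containing all meloforests must contain all the tadpole lines that the recursive procedure produces; since a vacuum single-vertex graph is entirely determined by its color-$0$ pairing, two such closures must coincide. The well-definedness of the recursive procedure itself — i.e.\ independence from the order in which innermost dipoles are selected — follows from the commutativity of disjoint dipole contractions already established in the Proposition on contraction of subgraphs.

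The main obstacle, as I see it, is establishing the two independence-from-choice properties cleanly: first, that different orderings of innermost dipole contractions always yield the same final vacuum graph $\cG$ (this is the combinatorial content of the uniqueness statement), and second, that every meloforest of $b$ — not only the maximal one produced by the recursion — actually embeds into $\cG$. Both reduce to the uniqueness of the canonical colored closure of a $d$-bubble recalled in \cite{uncoloring,universality}, but turning this into a precise argument requires tracking how non-maximal meloforests sit inside the maximal one, using the fact that nesting is preserved under the contraction step and that the external-face structure of a contracted innermost dipole is carried intact to the reduced bubble.
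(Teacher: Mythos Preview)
Your approach is correct and ultimately rests on the same fact as the paper's proof: the uniqueness of the canonical closure of melonic $2$-point colored subgraphs, as established in \cite{uncoloring,universality}. You even say so explicitly in your final paragraph.

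The difference is one of presentation rather than substance. The paper's proof is a two-sentence sketch: it observes that only \emph{melonic $2$-point subgraphs} of $b$ (in the colored-graph sense) can be closed into melopoles, that there is a unique way to do so, and that closing the \emph{maximal} such subgraphs yields $\cG$ directly. This is a top-down construction. Your proposal is the dual bottom-up construction: you peel off innermost $4$-dipole pairs recursively and reassemble. Both are valid, but yours requires verifying order-independence of the recursion and lifting non-maximal meloforests through the contraction steps --- work you correctly identify as the main obstacles, and which the paper's approach sidesteps entirely by going straight to the maximal melonic $2$-point pieces. Since both arguments defer the essential combinatorial content to \cite{uncoloring,universality} anyway, the paper's shorter route is more economical; your construction buys a more explicit picture of how $\cG$ is built line by line, which may be useful pedagogically but is not needed for the lemma itself.
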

\begin{proof}
As remarked in \cite{uncoloring,universality}, only melonic $2$-point subgraphs (in the sense of colored graphs) of $b$ can be closed in melopoles, and there is a unique way of doing so. Closing the maximal $2$-point subgraphs of $b$ in such a way results therefore in the unique graph $\cG$.
\end{proof}

We can now proceed with the definition of the melordering map.
\begin{definition}
For any $d$-bubble $b$, associated to the invariant $I_b$, and a cut-off $\rho$, we define the \textit{melordered invariant} $\Omega_{\rho}(I_{b})$ as
\beq
\Omega_{\rho}(I_{b}) \equiv \sum_{\cM \in \cM(b)} \prod_{m \in \cM} \left( - \tau_{m} \right) I_{b, \rho}^{\cM}\,.
\eeq
\end{definition}
By convention, the sum over meloforests includes the empty one, so that $\Omega_{\rho}(I_{b})$ as same order as $I_{b}$. Products of contraction operators are commutative, the definition is therefore unambiguous. These (non-trivial) products of contractions ensure that each term in the sum is a weighted $d$-bubble invariant, making $\Omega_{\rho}$ a well-defined linear map from $\Inv$ to itself. An example is worked out explicitly in the Appendix.

\
Consider now the theory defined in terms of melordered interaction at cut-off $\rho$, with partition function:
\bes
\cZ_{\Omega_\rho} &=& \int \extd \mu_{C_\rho} (\vphi , \vphib) \, \e^{- S_{\Omega_\rho}(\vphi , \vphib )} \,, \\
S_{\Omega_\rho}(\vphi , \vphib ) &=& \sum_{b \in \cB} t_b^R \, \Omega_{\rho}(I_b )(\vphi , \vphib).
\ees
We shall then consider the perturbative expansion in the renormalized couplings $t^R_b$ and prove that the corresponding Feynman amplitudes are finite. Let us call $\cS_N^{\Omega_\rho}$ the $N$-point Schwinger function of the melordered model. The next proposition shows that renormalized amplitudes have the expected form.
\begin{proposition}
The $N$-point Schwinger function $\cS_N^{\Omega_\rho}$ expands as:
\beq
\cS_N^{\Omega_\rho} = \sum_{\cG \; \mathrm{connected}, N(\cG)= N} \frac{1}{s(\cG)} \left(\prod_{b \in \cB} (- t_b^R )^{n_b (\cG)}\right) \cA_\cG^{R} \,,
\eeq
where the renormalized amplitudes can be expressed in terms of the bare ones as
\beq
\cA_\cG^R = \left( \sum_{\cM \in \cM(\cG)} \prod_{m \in \cM} \left( - \tau_{m} \right) 
\right) \cA_\cG \,.
\eeq
\end{proposition}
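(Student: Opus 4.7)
The plan is to expand $\cS_N^{\Omega_\rho}$ perturbatively and rearrange the resulting sum. Standard Gaussian/Wick analysis together with the linked-cluster theorem express it as a sum over connected Feynman graphs with symmetry factor $1/s(\cG)$, each vertex drawn from the invariants appearing in $S_{\Omega_\rho}$. Substituting the definition $\Omega_\rho(I_b) = \sum_{\cM_b\in\cM(b)}\prod_{m\in\cM_b}(-\tau_m) I_{b,\rho}^{\cM_b}$ and using linearity of the Gaussian expectation, at each vertex $v$ one picks up a bubble type $b_v$ and a vertex meloforest $\cM_v \in \cM(b_v)$, together with a global prefactor $\prod_v\prod_{m\in\cM_v}(-\tau_m)$ acting on the bare amplitude $\cA_\cG$ constructed from the original interactions $I_{b_v}$.

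The central combinatorial step is the bijection $\cM(\cG) \longleftrightarrow \prod_v \cM(b_v)$ sending $\cM \mapsto (\cM_v)_v$ with $\cM_v := \{m\in\cM : m\subset b_v\}$, inverted by $(\cM_v)_v \mapsto \bigsqcup_v \cM_v$. Because every melopole is by definition a connected single-vertex subgraph, each $m \in \cM$ lies inside exactly one bubble $b_v$; and since the internal faces of a melopole are closed circuits built only from its own tadpole lines, melopoles attached to distinct vertices are automatically line- and face-disjoint. Hence the meloforest condition on $\cM \subset \cG$ splits into the product of meloforest conditions on each $\cM_v \subset b_v$. Under this bijection, $\prod_v\prod_{m\in\cM_v}(-\tau_m)$ becomes $\prod_{m\in\cM}(-\tau_m)$, while the coupling data $\prod_v(-t_{b_v}^R)$ reassembles as $\prod_b(-t_b^R)^{n_b(\cG)}$.

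It then remains to commute this product past the Gaussian contractions producing $\cA_\cG$. The earlier factorization proposition gives $\tau_\cH \cA_\cG = \nu_\rho(\cH)\cA_{\cG/\cH}$ for a connected subgraph $\cH$; its proof shows that $\tau_m$ only localizes the target group variables of external faces of $m$, namely variables attached to legs of $b_v$ paired externally to $m$ by the Wick contractions, so that $\tau_m$ commutes with those contractions. For a meloforest $\cM$, distinct melopoles being either disjoint or nested, their respective $\tau_m$'s act on essentially independent data, and $\prod_{m\in\cM}(-\tau_m)$ is thus a well-defined operator on $\cA_\cG$. Assembling these three ingredients and comparing to the expansion of $\cS_N^{\Omega_\rho}$ gives the claimed formula with $\cA_\cG^R = \sum_{\cM\in\cM(\cG)}\prod_{m\in\cM}(-\tau_m)\cA_\cG$. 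The main technical obstacle is precisely this last commutation step: one must verify that $\tau_m$, originally defined as an endomorphism of the space of invariants prior to integration, lifts consistently to a well-defined operator on the fully integrated amplitude, and that the symmetry factor $s(\cG)$ is unaffected by the reorganization that collapses the sum over smaller melordered vertices back to a sum over the original bubbles $b_v$ decorated by a meloforest.
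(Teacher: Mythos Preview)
Your proof is correct and follows essentially the same route as the paper: the key step in both is the bijection $\cM(\cG) \leftrightarrow \prod_v \cM(b_v)$ between meloforests of $\cG$ and tuples of vertex meloforests, which allows the sum over $\cM(\cG)$ to factorize over bubbles and thereby match the perturbative expansion of the melordered interaction. You are in fact more careful than the paper on two points---the justification of the bijection via the single-vertex nature of melopoles, and the commutation/symmetry-factor issues you flag at the end---whereas the paper simply states the bijection and asserts that the resulting operator product is ``a valid repackaging of all the Wick contractions generated by the melordered interaction.''
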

\begin{proof}
We first remark that the set $\cM (\cG)$ of meloforests of $\cG$ can be described according to meloforests of bubble vertices $b \in \cB(\cG)$:
\beq
\cM (\cG) = \left\{ \underset{b \in \cB}{\bigcup} \cM_b | \cM_b \; { \rm meloforest} \; {\rm of} \; b \in \cB(\cG) \right\}\,.
\eeq 
$\cA_\cG^R$ as defined above can therefore be written
\bes
\cA_\cG^R &=& \left( \sum_{ (\cM_b)_{b \in \cB(\cG)} } \prod_{b \in \cB(\cG)} \prod_{m \in \cM_b} \left( - \tau_{m} \right) 
\right) \cA_\cG \\
&=&  \prod_{b \in \cB(\cG)} \left( \sum_{\cM_b} \prod_{m \in \cM_b} \left( - \tau_{m} \right) 
\right) \cA_\cG \,.
\ees
Each element of the product over $b \in \cB(\cG)$ is a contraction operator taking all melopoles associated to $b$ into account. Let us fix a graph $\cG$ and a bubble $b$. Among the set of Wick contractions appearing in $\cS_N^{\Omega_\rho}$, the operator $\underset{\cM_b}{\sum} \underset{m \in \cM_b}{\prod} \left( - \tau_{m} \right)$ encodes all the terms due to the interaction $\Omega_\rho (I_b)$ that are compatible with the combinatorics of the external legs of $b$ in $\cG$ and the structure of the rest of the graph. We therefore understand that $\cS_N^{\Omega_\rho}$ as written above is a valid repackaging of all the Wick contractions generated by the melordered interaction.
\end{proof}
We will devote the whole section \ref{sec:finiteness} to proving that the renormalized amplitudes are indeed finite. Before that, we return to submelonic vacuum divergences.

  \subsection{Vacuum submelonic counter-terms}\label{sec:sub}

The melordering we just introduced is designed to remove melopole divergences, including logarithmic divergences of non-vacuum graphs and linear divergences resulting from vacuum melopoles. However, we have seen that a third source of divergences is given by submelonic vacuum graphs. They again concern tadpole graphs, so they can also be removed by adding extra counter-terms to the melordering of some of the bubbles. As long as we are concerned with computations of transition amplitudes, they are irrelevant since they will only affect $\cZ$ and none of the connected Schwinger functions.  But we include them here for completeness.  
 
\
We can define an \textit{extended melordering} $\overline{\Omega}_\rho$ that coincides with $\Omega_\rho$ for bubbles which cannot be closed in a submelonic vacuum graph, and adds additional counter-terms to those which can. We can call the latter \textit{submelonic bubbles}. They are exactly the bubbles that reduce to a four-point graph as in Figure \ref{sub_open} once all the melonic parts have been closed into melopoles and contracted. Such bubbles generate additional divergent forests, which we can call \textit{submelonic forests}:
\begin{definition}
Let $b$ be a submelonic bubble. A submelonic forest of $b$ is a forest $\cS = \cM \cup \{ \cG \}$, where $\cM$ is a melonic forest and $\cG$ is a vacuum graph with a single vertex $b$. We call $I_{b, \rho}^{\cS}$ the amplitude associated to the graph $\cG$. We call $\cS(b)$ the set of submelonic forests of $b$.
\end{definition}   
{\bf Remark.} Given a submelonic bubble, there are exactly two possible choices for $\cG$, which correspond to the two possible ways of closing the melopole-free graph of Figure \ref{sub_open}.

\begin{figure}
\begin{center}
\includegraphics[scale=0.5]{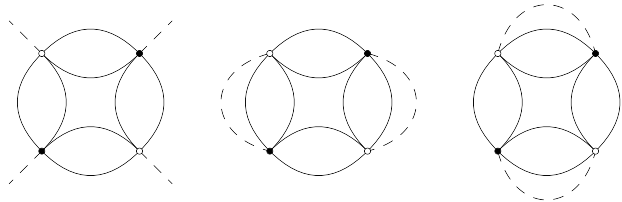}
\caption{On the left: structure of a submelonic bubble once all melonic parts have been closed into melopoles and contracted. On the right: the two ways of obtaining a submelonic vacuum graph.}
\label{sub_open}
\end{center}
\end{figure}

The extended melordering is finally defined by
\beq
\overline{\Omega}_{\rho}(I_{b}) \equiv \sum_{\cM \in \cM(b)} \prod_{m \in \cM} \left( - \tau_{m} \right) I_{b, \rho}^{\cM} + \sum_{\cS \in \cS(b)} \prod_{s \in \cS} \left( - \tau_{s} \right) I_{b, \rho}^{\cS}
\eeq
when $b$ is submelonic. This implies similar formulas for renormalized amplitudes in the extended melordered model, which in particular do not affect the expression for melordered connected Schwinger functions. The only difference will be that the partition function of the extended melordered model will be well-defined as a formal series, contrary to the simple melordering for which $\cZ$ will have some logarithmically divergent coefficients.

 %!TEX root = ../pphi.tex
 
% finiteness

\section{Finiteness of the renormalized series}\label{sec:finiteness}

In this section, we will prove that melordered models with maximal interaction order $p < + \infty$ are perturbatively finite at any order. To avoid dealing with submelonic vacuum graphs, we will only focus on the connected Schwinger functions, which are the physically meaningful quantities after all. They are well-defined formal series in the renormalized couplings if all non-vacuum and connected renormalized amplitudes $\cA_\cG^{R}$ are finite.

\
We will again rely on the multi-scale analysis, following the usual procedure of \cite{VincentBook}, which consists in two steps. We first need to show that renormalized amplitudes associated to bare divergent graphs verify multi-scale convergent bounds. This is most conveniently done through a classification of divergent forests (in our case meloforests), which for a given scale attribution $\mu$, splits these in two families: the \textit{dangerous} ones, associated to high subgraphs, that cancel genuine divergences; and on the other hand \textit{inoffensive} divergent forests that do not have any quasi-locality property, henceforth do not serve any purpose. The inoffensive forests bring finite contributions that do not ruin the power-counting, and can rather be interpreted as a drawback of the renormalized series: they have no physically meaningful consequence, and in addition (in just renormalizable models, but not for super-renormalizable models like the ones treated here) results in ``renormalon effects" that typically prevents from constructing a convergent series. In a second step, we will prove that the sum over scale attributions can be performed, and the cut-off $\rho$ sent to infinity while keeping the amplitudes finite. 

  \subsection{Classification of forests}

We follow the general classification procedure of \cite{VincentBook}, that at each scale attribution allows to factorize the contraction operators defining the renormalized amplitude. Let $\cG$ be a connected (non-vacuum) graph. We can decompose the renormalized amplitude $\cA_\cG^R$ in terms of its scale attributions:
\beq\label{2sums}
\cA_\cG^R = \sum_\mu \sum_{\cM \in \cM(\cG)} \prod_{m \in \cM}(- \tau_m) \cA_{\cG , \mu} \,. 
\eeq
The classification of forests is a reshuffling of the sum over meloforests that allows to permute the two sums. We know that for a given scale attribution $\mu$, the forests that contribute to the divergences are those containing high melopoles. We therefore need to define the notion of high meloforest, and reorganize the sum in terms of these quantities. We follow the standard procedure, and start with the following set of definitions.

\begin{definition}
Let $\cG$ be a connected graph, $\mu$ a scale attribution, and $\cM$ a meloforest of $\cG$.
\begin{enumerate}[(i)]
\item We say that a subgraph $g \subset \cG$ is \textit{compatible} with a meloforest $\cM$ if $\cM \cup \{ g \}$ is a forest.
\item If $g$ is compatible with a meloforest $\cM$, we note $B_{\cM} (m)$ the \textit{ancestor} of $g$ in $\cM \cup \{ g \}$, and we similarly call $A_\cM (g) \equiv \{m \subset g | m \in \cM \}$ the \textit{descendants}.
\item \textit{Internal and external scales} of a compatible graph $g$ in a meloforest $\cM$ are defined by:
\beq
i_{g , \cM}(\mu) = \inf_{e \in L( g \backslash A_\cM (g) )} i_e(\mu)\,, 
\qquad e_{g , \cM}(\mu) = \sup_{e \in N(g) \cap B_\cM (g)} i_e(\mu).
\eeq
\item The \textit{dangerous} part of a meloforest $\cM$ with respect to $\mu$ is:
\beq
D_\mu (\cM) = \{ m \in \cM | i_{m , \cM}(\mu) > e_{m , \cM}(\mu) \}\,,
\eeq 
and the \textit{inoffensive} part is the complement $I_\mu (\cM) = \cM \backslash D_\mu (\cM)$. Finally $I(\mu)$ is the set of all inoffensive forests in $\cG$.
\end{enumerate}
\end{definition} 
{\bf{Remarks.}} The notions of internal and external scales with respect to a meloforest are consistent with the previous definitions, since $i_{m , \emptyset} = i_m$ and $e_{m, \emptyset} = e_m$. Moreover, (non-vacuum) melopoles have exactly two external legs, which makes the situation relatively simple.

The following important lemma leads to the partition of forests.
\begin{lemma}
Given a meloforest $\cM$,
\beq
I_\mu ( I_\mu (\cM) ) =  I_\mu (\cM)\,. 
\eeq
\end{lemma}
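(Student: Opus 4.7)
The plan is to show directly that every element of $I_\mu(\cM)$ remains inoffensive with respect to the smaller forest $I_\mu(\cM)$ itself. Concretely, set $\cM' \equiv I_\mu(\cM)$; it suffices to prove that for every $m' \in \cM'$, $i_{m',\cM'}(\mu) \leq e_{m',\cM'}(\mu)$, since this then gives $D_\mu(\cM') = \emptyset$, hence $I_\mu(\cM') = \cM'$, which is precisely the claim.

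The main step is to compare the two quantities $i_{m',\cM}$ and $i_{m',\cM'}$ (and similarly for the external scales) by tracking how the ancestor and descendant sets change when one removes $D_\mu(\cM)$. First, from $\cM' \subset \cM$ and the definition of descendants, one has $A_{\cM'}(m') = \{m \subset m' \mid m \in \cM'\} \subseteq A_\cM(m')$. Therefore $L(m' \setminus A_{\cM'}(m')) \supseteq L(m' \setminus A_\cM(m'))$ and, taking the infimum over the larger set,
\begin{equation}
i_{m',\cM'}(\mu) \;\leq\; i_{m',\cM}(\mu).
\end{equation}
Symmetrically, because the meloforest structure makes the ancestors of $m'$ in $\cM$ a totally ordered chain, deleting some of them can only replace $B_\cM(m')$ by a larger element of the chain (or keep it the same); thus $B_\cM(m') \subseteq B_{\cM'}(m')$. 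Hence $N(m') \cap B_\cM(m') \subseteq N(m') \cap B_{\cM'}(m')$ and, taking the supremum over the larger set,
\begin{equation}
e_{m',\cM'}(\mu) \;\geq\; e_{m',\cM}(\mu).
\end{equation}

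Combining these two inequalities with the fact that $m' \in I_\mu(\cM)$ gives $i_{m',\cM}(\mu) \leq e_{m',\cM}(\mu)$, one concludes
\begin{equation}
i_{m',\cM'}(\mu) \;\leq\; i_{m',\cM}(\mu) \;\leq\; e_{m',\cM}(\mu) \;\leq\; e_{m',\cM'}(\mu),
\end{equation}
so $m' \in I_\mu(\cM')$, as desired. The only delicate point, and what I would check carefully, is the monotonicity $B_\cM(m') \subseteq B_{\cM'}(m')$: it relies on the fact that meloforests are genuine forests so that the ancestors of $m'$ form a chain under inclusion, which is why removing lower ancestors cannot yield a \emph{smaller} immediate ancestor in $\cM'$. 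Once this is established, everything else is a matter of unfolding the definitions of internal and external scales relative to a forest.
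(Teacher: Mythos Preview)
Your argument is correct and is essentially the standard monotonicity argument that the paper defers to (the paper's own proof is just ``Similar to \cite{VincentBook}, but simpler''). The two inequalities $i_{m',\cM'}\le i_{m',\cM}$ and $e_{m',\cM'}\ge e_{m',\cM}$ follow exactly as you say from $A_{\cM'}(m')\subseteq A_{\cM}(m')$ and $B_{\cM}(m')\subseteq B_{\cM'}(m')$, and the chain property of ancestors in a forest is precisely what justifies the second inclusion; nothing further is needed.
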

\begin{proof}
Similar to \cite{VincentBook}, but simpler. 
\end{proof}

This implies that the set of meloforests $\cM( \cG )$ of a connected graph $\cG$ can
be partitioned, according to the inoffensive forests associated to any scale attribution $\mu$:
\beq
\cM( \cG ) = \underset{\cM | I_\mu (\cM) = \cM }{\bigcup} \{ \cM' | I_\mu ( \cM' ) = \cM \}\,.
\eeq
We can finally characterise the equivalence class of a meloforest $\cM$ by introducing its maximal forest $\cM \cup H_\mu (\cM)$, where
\beq
H_\mu (\cM) = \{ m \; {\rm compatible} \; {\rm with} \; \cM \, |  \; m \in D_\mu ( \cM \cup \{ m \})\}\,.
\eeq
We can indeed show that:
\begin{proposition}
For any $\cM \in I(\mu)$, $\cM \cup H_\mu (\cM)$ is a meloforest, and moreover:
\beq
\forall \cM' \in \cM(\cG), \, I_\mu(\cM') = \cM \Longleftrightarrow \cM \subset \cM' \subset \cM \cup H_\mu (\cM)\,.
\eeq
\end{proposition}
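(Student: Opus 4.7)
The plan is to follow the classical forest classification template from multi-scale analysis, adapted to meloforests. The statement has two parts: first, that $\cM \cup H_\mu(\cM)$ is itself a valid meloforest; second, the characterization of the equivalence class of $\cM$ under the inoffensive-part projection $I_\mu$.

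For the first part, since $\cM$ is already a meloforest and every element of $H_\mu(\cM)$ is compatible with $\cM$ by definition, it suffices to verify pairwise compatibility between any two melopoles $m_1, m_2 \in H_\mu(\cM)$, i.e. that they are either line-and-face disjoint or nested. I would argue by contradiction: if $m_1$ and $m_2$ overlapped without inclusion, one could exhibit a line $e$ internal to one and external to the other whose scale $i_e$ simultaneously satisfies $i_e \geq i_{m_j, \cM \cup \{m_j\}}(\mu) > e_{m_j, \cM \cup \{m_j\}}(\mu) \geq i_e$, yielding the contradiction. The single-vertex tadpole structure of melopoles makes the overlap analysis particularly transparent here.

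For the equivalence, I would treat the two directions separately. The $\Rightarrow$ direction is mostly definitional: $I_\mu(\cM') = \cM$ immediately gives $\cM \subset \cM'$, and every $m \in \cM' \setminus \cM = D_\mu(\cM')$ must lie in $H_\mu(\cM)$ because the scales $i_{m, \cM \cup \{m\}}$ and $e_{m, \cM \cup \{m\}}$ only differ from $i_{m, \cM'}$ and $e_{m, \cM'}$ through the removal of descendants and ancestors in $\cM' \setminus \cM$, which one can verify preserves the strict inequality $i > e$. For the $\Leftarrow$ direction, assuming $\cM \subset \cM' \subset \cM \cup H_\mu(\cM)$, the key is to show that each $m \in \cM' \setminus \cM$ remains dangerous in $\cM'$, and every $m \in \cM$ remains inoffensive. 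The first claim requires tracking how $i_{m, \cM'}$ and $e_{m, \cM'}$ relate to their counterparts in $\cM \cup \{m\}$ via the nested/disjoint structure of $H_\mu(\cM)$; the second is exactly where the idempotence lemma $I_\mu(I_\mu(\cM)) = I_\mu(\cM)$ enters, ensuring stability under enlargement by dangerous subgraphs.

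The main obstacle is bookkeeping: the internal and external scales of a compatible subgraph $g$ are defined relatively to the ancestor $B_\cM(g)$ and descendants $A_\cM(g)$ in the ambient meloforest, so they shift when one enlarges or shrinks $\cM$. The whole argument reduces to showing that these shifts cannot flip the dangerous/inoffensive status of any melopole when passing between $\cM$, $\cM \cup \{m\}$, and $\cM'$, which is a careful but essentially combinatorial verification. Because non-vacuum melopoles have exactly two external legs and a single vertex, the case analysis should be significantly lighter than in the corresponding scalar field-theory statement in \cite{VincentBook}.
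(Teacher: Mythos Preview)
Your proposal is correct and follows precisely the classical forest-classification argument from \cite{VincentBook} that the paper itself invokes; indeed the paper's own proof is simply the one-line remark ``Similar to \cite{VincentBook}, but simpler,'' so you have in fact spelled out more detail than the authors do. Your observation that the single-vertex, two-external-leg structure of non-vacuum melopoles lightens the case analysis is exactly the ``but simpler'' the paper alludes to.
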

\begin{proof}
Similar to \cite{VincentBook}, but simpler. 
\end{proof}

This finally allows to reorganize the operator defining the renormalized amplitude as
\beq
\sum_{\cM \in \cM(\cG)} \prod_{m \in \cM}(- \tau_m) = \sum_{\cM \in I(\mu)} \prod_{m \in \cM} (- \tau_m) \prod_{h \in H_\mu (\cM)} (1 - \tau_h)\,,
\eeq
which decomposes the product of contraction operators into inoffensive parts and high parts. And since it holds for any $\mu$, we can use this formula to invert the two sums in (\ref{2sums}) and obtain:
\bes\label{classified}
\cA_\cG^R &=& \sum_{\cM \in \cM(\cG)} \cA_{\cG , \cM}^R\,,\\
\cA_{\cG , \cM}^R &=& \sum_{\mu | \cM \in I(\mu)} \prod_{m \in \cM} (- \tau_m) \prod_{h \in H_\mu (\cM)} (1 - \tau_h) \cA_{\cG , \mu}\,.
\ees
The factorization (\ref{classified}) is key to the proof of finiteness. We shall first show that, with respect to the bare theory, the power-counting of $\cA_{\cG , \cM}^{R}$ for a given scale attribution is improved, and is always convergent. We will then explain why the sum over scales is finite given such convergent multi-scale bounds. The final sum over meloforests will not bring more divergences, since their cardinal is finite (and even bounded by $K^{n(\cG)}$
for some $K>0$).
	
  \subsection{Power-counting of renormalized amplitudes}	

Let us fix a meloforest $\cM$ and a scale attribution $\mu$ such that $\cM \in I(\mu)$. The product of operators acting on $\cA_{\cG , \mu}$ in (\ref{classified}) can be computed explicitly. We can for example first act with $\underset{m \in \cM}{\prod} \tau_m$ which evaluates as
\beq
\prod_{m \in \cM} \tau_m \, \cA_{\cG , \mu} = \left( \prod_{m \in \cM} \nu_\mu (m / {A_\cM(m)}) \right) \cA_{\cG / \cM , \mu}\,,
\eeq
where $\cG / \cM$ is the graph obtained from $\cG$ once all the subgraphs of $\cM$ have been contracted. This graph is nothing but $\cG / {A_{\cM}(\cG)}$. 
$\nu_\mu$ is a generalized notion of amplitude associated to subgraphs $\cH \subset \cG$, that just discards the contributions of external faces. In this sense, it is analogue to an amputated amplitude in usual field theories. In particular, we can assume that $\cA_{\cG / \cM , \mu}$ is an amputated amplitude and write:
\beq\label{intermediate}
\prod_{m \in \cM} \tau_m \, \cA_{\cG , \mu} = \prod_{ g \in \cM \cup \{ \cG \}} \nu_\mu (g / {A_{\cM }(g)})\,.
\eeq
The power-counting, which only depends on internal faces, is unaffected by the fact that we are working with such amputated amplitudes, and we conclude that
\beq
\vert \prod_{m \in \cM} (- \tau_m) \cA_{\cG , \mu} \vert \leq K^{L(\cG)} \prod_{ g \in \cM \cup \{ \cG \}} \prod_{(i , k)} M^{\omega[ ( g / {A_{\cM}(g)} )_i^{(k)} ]}\,.
\eeq 
This is a generalization of the power-counting (\ref{fund}), and reduces to it when $\cM = \emptyset$. This proves that the sum over inoffensive forests does not improve nor worsen the power-counting, as was expected. Finiteness is entirely implemented by the useful part of the contraction operators, namely $\underset{h \in H_\mu (\cM)}{\prod} (1 - \tau_h)$. To make this apparent, we first write it as
\beq
\prod_{h \in H_\mu (\cM)} (1 - \tau_h) = \prod_{g \in \cM \cup \{ \cG \}} \prod_{h \in H_\mu (\cM) | B_\cM(h) = g} (1 - \tau_h)
\eeq
and act on (\ref{intermediate}) to get
\beq\label{intermediate2}
\vert \prod_{h \in H_\mu (\cM)} (1 - \tau_h) \prod_{m \in \cM} (- \tau_m) \cA_{\cG , \mu}  \vert
=  \prod_{ g \in \cM \cup \{ \cG \}} \prod_{h \in H_\mu (\cM) | B_\cM(h) = g} \vert (1 - \tau_h) \,  \nu_\mu (g / {A_{\cM}(g)}) \vert\,.
\eeq
Now, the effect of each $(1 - \tau_h)$ is to interpolate one of the variables of (at most two) external propagators in $N (h) \cap ( g / {A_{\cM}(g)} )$. For example, assuming the fourth variable is concerned (that is $h$ is a melopole that has been inserted on a colored line of color $4$), we have something of the form
\beq
C_{i}( \theta_1 , \dots , \theta_4 ; \theta_\ell') - C_i( \theta_1 , \dots , \tilde{\theta}_4 ; \theta_\ell')
= \int_{0}^{1} \extd t \left( \theta_4 - \tilde{\theta}_4  \right) \frac{\partial}{ \partial \theta_4} C_i( \theta_1 , \dots , \tilde{\theta}_4 + t (\theta_4 - \tilde{\theta}_4) ; \theta_\ell')\,,
\eeq
with $i \leq e_{\cM , h} (\mu)$. Moreover, since $h$ is high in $g / {A_{\cM}(g)}$, $| \theta_4 - \tilde{\theta}_4 |$ is at most of order $M^{- i_{\cM , h} (\mu)} $. So using the bound (\ref{deriv_bound}) on derivatives of the propagator, we conclude that $(1 - \tau_h)$ improves the bare power-counting by a factor:
\beq
M^{i} |\theta_4 - \tilde{\theta}_4| \leq K M^{ e_{\cM , h} (\mu) - i_{\cM , h} (\mu) }\,.
\eeq

This additional decay allows to prove the following proposition.
\begin{proposition}\label{pc_r}
There exists a constant $K$ such that for any graph $\cG$ and meloforest $\cM$:
\beq
| \cA_{\cG, \cM}^{R} | \leq  K^{L(\cG)}  \sum_{\mu | \cM \in I(\mu)} \prod_{ g \in \cM \cup \{ \cG \}} \prod_{ (i , k) } M^{\omega'[ ( g / {A_{\cM}(g)} )_i^{(k)} ]}\,,
\eeq
where
\beq
\omega'[ ( g / {A_{\cM}(g)} )_i^{(k)} ] = \min \{ - 1 , \omega[ ( g / {A_{\cM}(g)} )_i^{(k)} ] \}\,,
\eeq
except if $g \in \cM$ and $( g / {A_{\cM}(g)} )_i^{(k)} = g / {A_{\cM}(g)}$, in which case $\omega(( g / {A_{\cM}(g)} )_i^{(k)}) = 0$.
\end{proposition}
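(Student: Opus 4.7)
The plan is to combine the factorization (\ref{intermediate2}) with the multi-scale fundamental bound (\ref{fund}), and to reinvest the Taylor improvement generated by each operator $(1-\tau_h)$, $h\in H_\mu(\cM)$, as a telescoping product of per-slice factors $M^{-1}$ distributed along the Gallavotti-Nicol\'o tree of the amputated quotient $g/A_\cM(g)$ with $g = B_\cM(h)$. First, fixing $\mu$ such that $\cM\in I(\mu)$, I apply (\ref{intermediate2}) so that the full amplitude splits as a product of amputated sub-amplitudes $\nu_\mu(g/A_\cM(g))$ indexed by $g \in \cM\cup\{\cG\}$, each decorated by those Taylor operators $(1-\tau_h)$ whose ancestor $B_\cM(h)$ equals $g$. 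The bound (\ref{fund}) applied to each $\nu_\mu$ yields the bare product $\prod_{(i,k)} M^{\omega[(g/A_\cM(g))_i^{(k)}]}$, since amputation removes only external-face contributions and does not affect the internal power-counting.

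Second, for each $h\in H_\mu(\cM)$ I estimate $(1-\tau_h)$ by the Taylor remainder already written out before the proposition: after integrating in $t\in[0,1]$, the derivative bound (\ref{deriv_bound}) gives an extra $M^{e_{h,\cM}(\mu)}$ relative to the bare propagator estimate (\ref{propa_bound}), while the difference $|\theta-\tilde\theta|$ is controlled by $M^{-i_{h,\cM}(\mu)}$, since the internal holonomies of the high melopole $h$ are peaked at the identity up to that scale. The net improvement factor $M^{e_{h,\cM}(\mu)-i_{h,\cM}(\mu)}$ is then rewritten as the telescoping product $\prod_{i=e_{h,\cM}(\mu)+1}^{i_{h,\cM}(\mu)} M^{-1}$, and each factor $M^{-1}$ is attributed to the unique GN node $(g/A_\cM(g))_i^{(k)}$ of the auxiliary amplitude that properly contains $h$ at scale $i$. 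Since melopoles carry $\omega\in\{0,1\}$ and, by Proposition \ref{fund_u1}, the only nodes of $g/A_\cM(g)$ with $\omega\geq 0$ are those containing a melopole, the maximality built into the definition of $H_\mu(\cM)$ guarantees that every such node receives at least one improvement factor --- except when the node is the full quotient $g/A_\cM(g)$ and $g\in\cM$, in which case the counter-term $(-\tau_g)$ has already been extracted, no Taylor operator acts, and the exceptional value $\omega'=0$ appears.

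Combining these steps produces the effective divergence degree $\omega'$ at each GN node, uniformly bounded by $-1$ outside the exceptional case, and the bound of the proposition follows, the propagator prefactors being absorbed into $K^{L(\cG)}$. The delicate step will be the third one: checking that the auxiliary GN trees of the quotients $g/A_\cM(g)$ nest correctly with the high melopoles of $H_\mu(\cM)$ living inside $g$, so that each $M^{-1}$ improves exactly the GN nodes it should without double counting across different ancestors $g\in\cM\cup\{\cG\}$, and that the ancestor factorization in (\ref{intermediate2}) is preserved by the interpolation built into the Taylor remainders.
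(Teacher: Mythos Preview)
Your proposal is correct and follows essentially the same approach as the paper's own proof, which is extremely terse (two sentences): it simply invokes (\ref{intermediate2}) together with the Taylor-remainder decay from each $(1-\tau_h)$ and asserts that this improves the degree by $-1$ for every high subgraph $(g/A_\cM(g))_i^{(k)}$ that has external legs, i.e.\ every non-root node. Your write-up is a detailed unpacking of precisely this mechanism --- factorization, fundamental bound on each $\nu_\mu$, telescoping $M^{e_{h,\cM}-i_{h,\cM}}$ across scales, and distribution of the resulting $M^{-1}$ factors to GN nodes --- which is the standard multi-scale argument the paper is implicitly relying on.
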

\begin{proof}
From (\ref{intermediate2}), and using the additional decays from operators $(1 - \tau_h)$, one improves the degree by a factor $-1$ for most of the high subgraphs. More precisely, this is possible for any high subgraph that has external legs in a contraction $g / {A_{\cM}(g)}$, that is any high subgraph 
$( g / {A_{\cM}(g)} )_i^{(k)}$ different from a root  $g / {A_{\cM}(g)}$.
\end{proof}

  \subsection{Sum over scale attributions}

Equipped with this improved power-counting, we can finally prove that the renormalized amplitudes are finite. For clarity of the presentation, let us first show it for a fully convergent graph $\cG$, that is a graph with no melopole.   
In this case, we know that:
\beq
| \cA_{\cG , \mu} | \leq K^{L(\cG)}  \prod_{ (i , k ) } M^{- N (\cG_i^{(k)}) / 4}\,,
\eeq
from which we need to extract enough decay in $\mu$ to sum over the scale attributions. Let $\cB(\cG)$ be the set of vertices (i.e. $4$-bubbles) of $\cG$, and for $b \in \cB(\cG)$ let us call $L_b (\cG)$ the set of lines that are hooked to it. We can define notions of internal and external scales associated to a bubble $b$:
\beq
i_b (\mu) = \sup_{l \in L_b (\cG)} i_l (\mu)  \,, \qquad  e_b (\mu) = \inf_{l \in L_b (\cG)} i_l (\mu)\,.
\eeq
We then remark that for any $i \in \mathbb{N}$ and $b \in \cB(\cG)$, $b$ touches a high subgraph $\cG_i^{(k)}$ if and only if $i \leq i_b (\mu)$. Moreover when it does, the number of high subgraphs $\cG_i^{(k)}$ that touch $b$ is certainly bounded by its number of external legs, and therefore by $p$. Hence we can assign a fraction $1 / p$ of the decay of a bubble to every high subgraph with respect to which it is external. This yields  
\beq
 \prod_{ (i , k ) } M^{- N (\cG_i^{(k)}) / 4} \leq  \prod_{ (i , k ) } \prod_{b \in \cB(\cG_i^{(k)})| e_b (\mu) < i \leq i_b (\mu)}
 M^{-  \frac{1}{4p}}\,,
\eeq 
by using the fact that $b$ is an external vertex of $\cG_i^{(k)}$ exactly when $e_b (\mu) < i \leq i_b (\mu)$. We can then invert the two products and obtain
\beq
| \cA_{\cG , \mu} | \leq K^{L(\cG)}  \prod_{ b \in \cB(\cG) } \prod_{ (i , k) | e_b (\mu) < i \leq i_b (\mu)} M^{-  \frac{1}{4p}} = K^{L(\cG)}  \prod_{ b \in \cB(\cG) } M^{- \frac{i_b (\mu) - e_b (\mu)}{3 p}}\,.
\eeq
Finally, since the number of pairs of legs hooked to a given vertex $b$ is bounded by $p (p -1) / 2$, we can finally conclude that
\beq
| \cA_{\cG , \mu} | \leq K^{L(\cG)}  \prod_{ b \in \cB(\cG) } \prod_{(l , l') \in L_b (\cG) \times L_b (\cG)} M^{- \frac{ 2 | i_{l'} (\mu) - i_l (\mu) | }{3 p^2 (p - 1)}}\,.
\eeq
With this decay at hand, the sum over scales can be performed by picking a 'tree of scales', very similarly to the choice of a tree adapted to the GN tree that establishes the power-counting. We refer to \cite{VincentBook} for more details, and just state the resulting proposition.

\begin{proposition}
There exists a constant $K > 0$ such that the amplitude of any fully convergent graph $\cG$ is absolutely convergent with respect to $\mu$, and moreover
\beq
\sum_{\mu} | \cA_{\cG , \mu} | \leq K^{L (\cG)} \,.
\eeq
\end{proposition}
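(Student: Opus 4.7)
The plan is to sum the pairwise bound
\[
|\cA_{\cG,\mu}| \leq K^{L(\cG)} \prod_{b \in \cB(\cG)} \prod_{(l,l') \in L_b \times L_b} M^{-\beta |i_l(\mu) - i_{l'}(\mu)|},
\qquad \beta = \frac{2}{3p^2(p-1)},
\]
derived just above the statement, over scale assignments $\mu$, using the standard tree-of-scales technique. External legs of $\cG$ are treated as lines at conventional scale $-1$, consistent with the definition of $e_b$ used in the bound.

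First, I view $\cG$ as a multigraph with $d$-bubbles as vertices and color-$0$ internal lines as edges, and choose a spanning tree $\cT$, rooted at some bubble $b_0$ that carries at least one external leg $\ell_0$ of $\cG$; such a $b_0$ exists because $\cG$ is connected and non-vacuum, so $N(\cG) \geq 2$. Next I linearly order the internal lines $e_1,\ldots,e_{L(\cG)}$ with loop (non-tree) edges first, then tree edges in reverse breadth-first order from $b_0$, so that $e_{L(\cG)}$ is a tree edge incident to $b_0$. By construction, for every $k < L(\cG)$ the line $e_k$ shares some bubble with a line $q(e_k)$ positioned strictly later in the order, and $e_{L(\cG)}$ shares $b_0$ with the external leg $\ell_0$.

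From the big pairwise product I keep only, for each $k < L(\cG)$, the single factor $M^{-\beta|i_{e_k} - i_{q(e_k)}|}$, and for $k = L(\cG)$ the factor $M^{-\beta(i_{e_{L(\cG)}}+1)}$ arising from the pair $(e_{L(\cG)},\ell_0)$ at $b_0$; all other factors are at most $1$ and harmlessly discarded. The elementary geometric-series estimate
\[
\sum_{n \in \mathbb{N}} M^{-\beta|n-m|} \leq \frac{1+M^{-\beta}}{1-M^{-\beta}} \equiv C_0
\]
holds uniformly in $m \in \mathbb{Z}$. Summing scales in the chosen order therefore gives $\sum_\mu |\cA_{\cG,\mu}| \leq (K\,C_0)^{L(\cG)}$, proving the proposition after replacing $K$ by $K\,C_0$.

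The main obstacle, beyond these routine estimates, is verifying that a linear order with the "predecessor at a shared bubble" property exists; this is exactly what the reverse-BFS traversal of $\cT$ achieves for tree edges, supplemented by the trivial assignment of each loop edge to a tree edge present at one of its endpoint bubbles. No further analytical difficulty arises because, for fully convergent graphs (no melopole), the per-vertex decay already extracted from the preceding classification suffices, with no renormalization subtraction being needed at this stage.
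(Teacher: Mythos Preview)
Your argument is correct and is precisely the ``tree of scales'' technique that the paper invokes by reference to \cite{VincentBook} without spelling out. You keep one decay factor per line, anchor the last line to an external leg at conventional scale $-1$, and sum by a cascade of geometric series; this is the standard route. One minor wording quibble: in the single-bubble case there are no tree edges, so $e_{L(\cG)}$ is a loop rather than ``a tree edge incident to $b_0$'', but your anchoring to $\ell_0$ at $b_0$ still goes through unchanged, so this is cosmetic rather than a gap.
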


We now explain why similarly, when $\cG$ contains melopoles, the sum over $\mu$ in (\ref{classified}) can be performed without cut-off. From the power-counting (\ref{pc_r}), and given that melopoles have at most two external legs, one notices 
that
\beq
\omega[( g / {A_{\cM}(g)} )_i^{(k)}] \leq - \frac{N (( g / {A_{\cM}(g)} )_i^{(k)}) }{2}\,.
\eeq
So the decay that was proven for convergent graphs (\ref{fund_u1}) generalizes to
\beq
| \cA_{\cG, \cM}^{R} | \leq  K^{L(\cG)}  \sum_{\mu | \cM \in I(\mu)} \prod_{ g \in \cM \cup \{ \cG \}} \prod_{ (i , k) } M^{- \frac{N ( ( g / {A_{\cM}(g)} )_i^{(k)} )}{4} }\,.
\eeq  
The strategy used for proving convergence of fully convergent graphs is therefore applicable. We conclude that $\cA_{\cG , \cM }$ is an absolutely convergent series in $\mu$, and even bounded by $K^{L(\cG)}$ for some constant $K$. The final sum over meloforests is not problematic, as the number of melopoles associated to a bubble $b$ is clearly bounded by a constant (for example $2^{p/2}$). This means that the number of meloforests associated to a vertex $b$ is also bounded by a constant $K_1 > 0$, and since meloforests of graphs are by definition unions of meloforests associated to single vertices, the number of meloforests of $\cG$ is itself bounded by $K_1 ^{n (\cG)}$. Overall, we conclude that:
\begin{proposition}
There exists a constant $K > 0$, such that the renormalized amplitude of any (non-vacuum) graph $\cG$ verifies:
\beq
| \cA_{\cG}^{R} | \leq K^{L( \cG )} \,.
\eeq
\end{proposition}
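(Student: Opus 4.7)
The plan is to chain together three ingredients already prepared in the preceding sections: the forest classification formula (\ref{classified}), the improved power-counting of Proposition \ref{pc_r}, and a combinatorial bound on the number of meloforests. The starting point is simply
\[
|\cA_\cG^R| \;\leq\; \sum_{\cM \in \cM(\cG)} |\cA_{\cG,\cM}^R|,
\]
so that it suffices to bound each $|\cA_{\cG,\cM}^R|$ uniformly by $K^{L(\cG)}$ and to bound the cardinality of $\cM(\cG)$ by $K^{L(\cG)}$, possibly with an enlarged $K$.

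The main analytical step is summing over scale attributions $\mu$ compatible with a fixed $\cM$ using Proposition \ref{pc_r}. The crucial observation is that every non-vacuum melopole has at most two external legs. Hence for any $g \in \cM \cup \{\cG\}$ and any high connected component $(g/A_\cM(g))_i^{(k)}$ that is not a root of its subtree, the improved degree satisfies
\[
\omega'\!\left[(g/A_\cM(g))_i^{(k)}\right] \;\leq\; \min\!\left\{-1,\, -\tfrac{1}{4}\,N\!\left((g/A_\cM(g))_i^{(k)}\right)\right\},
\]
which is exactly the bound that drives the fully convergent case treated immediately above the statement. Therefore I would import verbatim the per-vertex argument from that paragraph: assign to each bubble $b$ of $g/A_\cM(g)$ a fraction $1/p$ of the $-N/4$ decay for every high subgraph with respect to which $b$ is external; invert the products; and then distribute across the $p(p-1)/2$ pairs of legs at $b$ to obtain a factor $\prod_{l,l' \in L_b(g/A_\cM(g))} M^{-c\,|i_l - i_{l'}|}$. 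The tree-of-scales summation of \cite{VincentBook} then yields absolute summability of the $\mu$-sum, with bound $K^{L(g/A_\cM(g))}$. Multiplying across the nodes $g \in \cM \cup \{\cG\}$, the exponents telescope to $K^{L(\cG)}$.

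It remains to bound $|\cM(\cG)|$. Because meloforests are built locally at each bubble (a meloforest of $\cG$ is, by definition, a disjoint union of meloforests of its vertices $b \in \cB(\cG)$), and because the number of melopoles attached to any single $b$ is bounded by a constant $K_1$ depending only on the maximum interaction order $p$ (as noted in the excerpt, at most $2^{p/2}$), we get $|\cM(\cG)| \leq K_2^{n(\cG)} \leq K_2^{L(\cG)}$ for a suitable $K_2 > 0$. Combining all constants concludes the proof.

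The main obstacle I anticipate is verifying that the hypothesis of Proposition \ref{fund_u1} (the $-N/4$ decay for non-melopole high subgraphs) really does transfer to the quotient graphs $g/A_\cM(g)$ appearing in the meloforest expansion. Since contracting melopoles inside a subgraph can reshape both the face structure and, more delicately, the external-face pattern of its descendants, one must check that each non-root high piece of $g/A_\cM(g)$ still looks like a connected non-vacuum graph to which Proposition \ref{fund_u1} applies; this is where the traciality of melopoles, together with the connectedness preservation of Proposition \ref{disconnected}(i), really do all the work. The rest is essentially bookkeeping within the framework of \cite{VincentBook}.
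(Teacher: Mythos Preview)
Your proposal is correct and follows essentially the same route as the paper: triangle inequality over meloforests, the improved power-counting of Proposition \ref{pc_r} combined with the observation that (non-vacuum) melopoles have $N \leq 2$ so that $\omega' \leq -N/4$ holds uniformly on all non-root high pieces, the tree-of-scales summation imported from the fully convergent case, and finally the $K_1^{n(\cG)}$ bound on the number of meloforests. The concern you flag about transferring the $-N/4$ decay to the quotients $g/A_\cM(g)$ is handled, as you say, by traciality and connectedness preservation under melopole contraction; the paper does not spell this out further either.
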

This not only proves renormalizability of the model, but also that there is no ``renormalon effect". The latter is a specific feature of our super-renormalizable model, that would not hold for more complicated just-renormalizable models. In such situations, it will be preferable to resort to the \textit{effective series}, because it is the unphysical sum over inoffensive forests automatically generated in the renormalized series that is responsible for this undesirable effect (see \cite{VincentBook}). 

\
We finally state the main theorem that was proven in this section.
\begin{theorem}
The melordered $U(1)$ model in $d = 4$, with an arbitrary finite set of $4$-bubble interactions, is perturbatively finite at any order.
\end{theorem}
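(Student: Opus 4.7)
To prove perturbative finiteness of the melordered $\U(1)$ model it suffices to show that every connected, non-vacuum renormalized amplitude $\cA_\cG^R$ has a finite limit as $\rho \to \infty$, since at each fixed order in the renormalized couplings $t_b^R$ only finitely many connected graphs $\cG$ contribute and the vacuum divergences are irrelevant for connected Schwinger functions. My approach would combine the multi-scale decomposition $\cA_\cG = \sum_\mu \cA_{\cG,\mu}$ with the meloforest expansion of $\cA_\cG^R$, then reshuffle the two sums via the inoffensive/dangerous classification so that the resulting bound is amenable to a standard tree-of-scales argument.

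First I would invoke the reorganization (\ref{classified}), writing $\cA_\cG^R = \sum_{\cM \in \cM(\cG)} \cA_{\cG,\cM}^R$ with each $\cA_{\cG,\cM}^R$ restricted to those scale attributions $\mu$ whose inoffensive part coincides exactly with $\cM$. The product of subtraction operators factorizes accordingly: the inoffensive piece $\prod_{m \in \cM}(-\tau_m)$, by (\ref{intermediate}), merely rewrites $\cA_{\cG,\mu}$ as a product of amputated amplitudes $\nu_\mu(g/A_\cM(g))$ indexed by $g \in \cM \cup \{\cG\}$, leaving the multi-scale power counting unchanged, while the useful piece $\prod_{h \in H_\mu(\cM)}(1 - \tau_h)$ carries out the actual renormalization. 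For every $h \in H_\mu(\cM)$, Taylor-interpolating an external propagator variable and using (\ref{deriv_bound}) together with the fact that the internal faces of $h$ peak the interpolated difference $|\theta - \tilde\theta|$ at scale $M^{-i_{\cM,h}(\mu)}$, one extracts a gain $M^{e_{\cM,h}(\mu)-i_{\cM,h}(\mu)}$ per subtraction. This lowers the divergence degree of every high non-root melonic component from $\omega \geq 0$ to $\omega \leq -1$, yielding the improved bound of Proposition \ref{pc_r}.

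Combining Proposition \ref{pc_r} with Proposition \ref{fund_u1}, every high non-root component then satisfies $\omega'(\cdot) \leq -N(\cdot)/4$, i.e.\ it decays exponentially in its number of external legs. To perform $\sum_\mu$ I would distribute this decay over bubble vertices: assigning a fraction $1/p$ (with $p$ the maximal interaction order) of the external-leg decay of each high component to its external bubbles, one obtains, for every pair of lines $(l,l')$ hooked to a common vertex $b$, a decay $M^{-|i_l - i_{l'}|/c(p)}$. A standard tree-of-scales argument as in \cite{VincentBook} then bounds $\sum_\mu |\cA_{\cG,\cM}^R|$ by $K^{L(\cG)}$ uniformly in $\rho$. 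The outer sum over meloforests is harmless, since a meloforest of $\cG$ is a union of meloforests of its individual bubble vertices and the latter are in bounded number at fixed interaction order, so $|\cM(\cG)| \leq K_1^{n(\cG)}$; summing the two bounds gives $|\cA_\cG^R| \leq K^{L(\cG)}$, uniformly in $\rho$, which proves the theorem.

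The principal obstacle in carrying this out rigorously is the control of the Taylor remainder $(1 - \tau_h)$ in the gauge-averaged setting: one must verify that the projection onto translation-invariant modes (the extra $\int \extd h$ in the propagator) does not dilute the face-peaking that makes $|\theta - \tilde\theta|$ small, and that the bookkeeping of internal versus external faces of $h$ inside the contracted graph $g / A_\cM(g)$ is consistent, so that the gain $M^{e_{\cM,h}(\mu) - i_{\cM,h}(\mu)}$ is genuinely available at every node of the Gallavotti-Nicol\'o tree. Once this gauge-theoretic subtlety is handled, the rest is a faithful TGFT translation of the classical multi-scale super-renormalizability proof for $P(\phi)_2$.
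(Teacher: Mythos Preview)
Your proposal is correct and follows essentially the same route as the paper: the classification of meloforests into inoffensive and dangerous parts leading to (\ref{classified}), the factorization (\ref{intermediate}), the Taylor-remainder gain $M^{e_{\cM,h}(\mu)-i_{\cM,h}(\mu)}$ from (\ref{deriv_bound}), the improved power-counting of Proposition~\ref{pc_r}, the distribution of the $-N/4$ decay over bubble vertices, and the final tree-of-scales summation together with the combinatorial bound $|\cM(\cG)|\leq K_1^{n(\cG)}$ are exactly the paper's steps. The ``principal obstacle'' you flag---that the gauge averaging might spoil the face-peaking needed for the interpolation---is precisely what the paper resolves earlier by proving that melopoles are tracial (hence contractible), so that the flat-holonomy condition does force the relevant $h_e$ close to $\one$; once you invoke that, the remainder argument goes through as you describe.
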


Note that this finiteness theorem would still hold true had we relied on the usual notion of Wick ordering%(as defined for example in \cite{salmhofer})
, for the difference between Wick ordering and melordering is a sum of convergent terms. However, while the Wick ordering preserves locality in the usual spacetime sense, it is incompatible with the TGFT locality principle (i.e. tensor invariance), in two respects: first, a tadpole graph can only approximately factorize as a coefficient times a tensor invariant interaction; second, such an approximate factorization only occurs for particular tadpoles, the tracial ones. These are the reasons why: a) the sum over pairings of external legs defining the usual Wick ordering needs to be restricted to pairings yielding tracial graphs; b) each such pairing must be supplemented with a contraction operator, which extracts a tensor invariant contribution.   
This is what makes the usual Wick ordering inappropriate, and the introduction of the melordering necessary. In this manner, only tensor invariant counter-terms are introduced in the action, which approximate the divergent contributions of the melopoles. The fact that the latter are tracial is key to the whole construction.

 %!TEX root = ../pphi.tex
 
% conclusion

\section{Conclusion and outlook}
Let us summarize what we have achieved in this paper. 

First of all, we have set up the general framework for the multi-scale analysis of TGFT models. Multi-scale analysis had already been applied to some simpler TGFT models \cite{tensor_4d, josephsamary}, but the models considered in these recent works lacked one ingredient that requires a more refined analysis, as we have seen: the gauge invariance (closure) condition imposed on the TGFT field. In simplicial geometric models where the TGFT field is expected to describe geometric $(d-1)$-simplices, with its arguments representing normal vectors to its $(d-2)$-faces (or their conjugate gravitational connections), this condition imposes the closure of the same faces. The Abelian models considered in this paper do not have such a simplicial geometric interpretation. But this same condition introduces a local gauge invariance at the level of each $d$-cell of the cellular complex dual to the TGFT Feynman diagram, and a corresponding discrete gauge connection associated to each line of the same diagram. The Feynman amplitudes then take the form (and interpretation) of lattice gauge theories with gauge group given by the domain space of (each argument of the) TGFT field. Beside this interpretation, the same condition introduces an additional coupling between lines and faces of the TGFT Feynman diagram, and a more interesting dependence of the amplitudes on the topology of the diagram. 

In turn, we have seen that this more involved structure forces a revision and a generalization of some important notions of standard field theory. We provided such generalized notions, more precisely we introduced a new notion of {\it connectedness}, a new notion of {\it locality} (which we named {\it traciality}) and, stemming from them, a new procedure for {\it contraction of high subgraphs}. From the lattice gauge theory perspective on the amplitudes, this amounts to a new coarse graining procedure, that itself deserves to be investigated in more detail.

Among the new notions we introduce also that of {\it melonic Wick ordering}, or {\it melordering}, generalizing again the usual notion of Wick ordering of interaction monomials and $N$-point functions. Such ordering is a first step in standard renormalization of field theories, which removes divergences associated to tadpoles.

Armed with the multi-scale framework and these new notions, we then analyzed in detail a concrete Abelian TGFT model, which corresponds to the $U(1)$ type of model studied in \cite{tensor_4d} but with the additional gauge invariance condition turning it into a quantization of a $U(1)$ gauge theory. We prove that it is {\it super-renormalizable} for any choice of polynomial interaction. The Feynman amplitudes are convergent, with the exception of some TGFT analogues of tadpoles, which we call {\it melopoles}. The melordering procedure then removes these divergences and leaves us with a finite renormalized model. Notice the striking difference 
with the power-counting of standard local quantum field theory, which for arbitrary polynomial local interactions
is super-renormalizable in $d=2$ rather than $d=4$.
\

We now turn to an outlook on future developments. Having set up the general framework for multi-scale analysis of renormalizability of TGFTs, the natural thing to do is to tackle more elaborate TGFT models. Remaining within the same class of Abelian models we studied in this paper, with Laplacian kinetic term and invariant tensor interactions, the generalization of our analysis to higher dimensions and higher-dimensional Abelian groups requires more accurate bounds than those of section 5.1
and the proof of contractiblity and traciality to more general graphs than the melopoles. We have checked some examples which indicate that this extension 
should be doable.
We know already the general power-counting, established first in \cite{lin}, and the corresponding (single slice) divergence degree. It suggests that in $d=3$, a model with an Abelian gauge group of dimension 3 would be renormalizable up to polynomial interactions of order 6. On the other hand, it would also suggest that in $d=4$, only models based on groups of dimension 2 at most would be renormalizable.

This preliminary estimate gives hope that models more closely related to $3d$ quantum gravity, thus based on the group $SU(2)$ would be renormalizable
with polynomial invariant interactions at order up to 6. The complete analysis would be more complicated than the one we have performed, because of the non-Abelian nature of the gauge group, but we can already make some informed guesses on its outcome. Previous studies on power-counting and scaling bounds \cite{GFTrenorm,valentinmatteo, Magnen:2009at} in topological models (thus initially without a Laplacian in the kinetic term) suggest that the relevant multi-scale fundamental bound we have given in the Abelian case still holds, but now with each $\omega( \cG_{i}^{(k)})$ replaced by $\omega_t ( \cG_{i}^{(k)})$, that is a {\it twisted} divergence degree taking into account the ordering of lines in the boundary of faces of the graph, and dependent on the second twisted Betti number of the complex corresponding to it. This is also in accordance with our understanding of the (translation) symmetry of the corresponding lattice gauge theory and simplicial gravity path integral \cite{GFTdiffeos,valentinmatteo}. On the other hand, we may also expect the same twist to be absent for melonic graphs, since they triangulate the 3-sphere. If this is true, then the $SU(2)$ would be just renormalizable up to order 6 in the interaction, as the Abelian counterpart $U(1)^3$. Only a detailed analysis can give support to this expectation. 

\

One issue that should be tackled in trying to extend the analysis performed in this paper to models more directly related to $3d$ gravity and $BF$ theories (as a step towards proper quantum gravity models) is how the renormalizability is affected by the introduction of further gauge invariance projections within the interaction terms of the model we have studied. Indeed, one way to understand the type of invariant interactions we have used is that they arise naturally when integrating out $d$ of the $d+1$ colored fields in a colored TGFT with standard $d$-simplex interaction. However, when this integration is performed in a topological colored TGFT model in which gauge invariance is imposed on all fields entering the $d$-simplex interaction, the corresponding projector ends up attached to the internal colored lines in each interaction vertex of the resulting single-field model. The effect of these additional projections on the power-counting should then be studied carefully.

Other variations of the class of models that we studied in this paper, that may be worth investigating as well, are models with the same type of interactions but different kinetic terms. The choice of the Laplacian in the Euclidean case is suggested by analogy with standard field theory and by other considerations such as its reflection (Osterwalder-Schrader) positivity, but we do not have at present a complete axiomatic formulation of TGFTs that would select it as the only reasonable choice on physical grounds; therefore, other possibilities can be considered.

\ 

The real goal, however, of studying renormalizability of TGFT models is to tackle and understand TGFT formulations of 4d quantum gravity \cite{DPFKR,EPRL,BO-BC,BO-Immirzi}. This is at once challenging and very interesting. First of all, the same general issues pointed out above apply to these models as well (role of projectors, choice of kinetic terms, etc). Second, these models are obviously based on non-Abelian and, in the Lorentzian context, non-compact groups, with their additional complications and subtleties that our analysis did not deal with. Third, even our limited expectations for how the non-Abelian nature of the group affects the power-counting, based on our analysis, have to be taken with great care due to the specific construction of these models. Indeed, from the group-theoretic point of view, the main ingredient that gives 4d gravity models starting from topological $BF$ ones is the so-called {\it simplicity} constraint which amounts to restricting the domain space of the TGFT field to submanifolds of the 4d rotation or Lorentz group. This submanifold is 3-dimensional. However, it cannot be assimilated to a 3-dimensional group of the type we have dealt with in this paper or their non-Abelian version, because it is either an homogeneous space, for Barrett-Crane-like models \cite{DPFKR,BO-BC}, or just a 3d submanifold of the rotation or the Lorentz group in models involving an Immirzi parameter \cite{EPRL,BO-Immirzi}. Therefore, it is 
premature to guess at this stage what the status of such models could be, concerning renormalizability. Only a careful analysis will tell.

\

Last, we would like to mention the need to go beyond perturbative renormalizability. On this road a first step should be to build fully at the constructive level the models defined and perturbatively renormalized in this paper. This should be possible for any positive even monomial interaction, starting with the simplest case, namely the $\phi^4$ interaction. It should probably also work
for any polynomial semi-bounded interaction. Such a constructive analysis should 
prove that their Schwinger functions are the Borel sums\footnote{More precisely, the Borel-LeRoy sums of appropriate order for monomial interaction of degree higher than $4$.}
of their perturbative expansion. We are quite confident that this can be achieved using the technique of the \emph{loop vertex expansion} or LVE \cite{Rivasseau:2007fr} combined with a ``cleaning expansion" and non-perturbative bounds ``\`a la Nelson". Indeed a similar program was recently achieved in the case of the ordinary $\phi^4_2$ model \cite{Rivasseau:2011df}Ê and of 
the non-commutative super-renormalizable Grosse-Wulkenhaar model in two dimensions \cite{ZW2}
whose power-counting and positivity properties are comparable; furthermore we know that the LVE, 
created to tackle non-perturbatively matrix models, applies quite naturally 
also to tensor models \cite{Magnen:2009at, universality}. In the long term, this constructive perspective, currently lacking in other 
approaches to quantum gravity, is certainly a major asset of the TGFT approach.

The next steps concern the perturbative study of the renormalization group flows of more advanced renormalizable TGFT models and ultimately their non-perturbative construction. The study of the flows starts with computing explicitly the leading terms of their beta functions. Obviously, this is technically challenging. The impressive analysis of \cite{josephaf} for the simpler model of \cite{tensor_4d} seems to indicate that TGFTs could be generically asymptotically free. If true, this would be very important for TGFTs in general and for gravitational models in particular. A growing coupling constant and a most likely finite domain of analyticity of the TGFT free energy would imply that these models would dynamically (and thus somehow inevitably) undergo a phase transition. In turn, for models where a pre-geometric interpretation of the variables and amplitudes is possible (e.g. the 4d gravity models) in terms at least of simplicial geometry, such phase transition may be \cite{GFTfluid,lorenzoGFT,vincentTensor}, like in matrix models \cite{MM}, the hallmark of the continuum geometric limit of the models (geometrogenesis), which would then have to be studied in great detail to understand its physical implications and generic features.

% acknowledgements%

 %!TEX root = ../bp_phase.tex

\section*{Acknowledgements}
It is a pleasure to thank Joseph Ben Geloun, Valentin Bonzom and Razvan Gurau for fruitful discussions in the early stages of this work, as well as for hosting us at the Perimeter Institute. 
This work is partially supported by a Sofja Kovalevskaja Award by the A. von Humboldt Stiftung, which is gratefully acknowledged. 
S.C. aknowledges travel funding from the European Union Seventh Framework Programme [FP7-People-2010-IRSES] under grant agreement number 269217.

% appendices %

\appendix

\newpage

 %!TEX root = ../pphi.tex
 
% phi_6

\section*{Appendix: Wick-ordering of a $\vphi^6$ interaction}

To illustrate the general statements of the paper, we give some more details for a model with a single $\vphi^6$ interaction:
\bes
S(\vphi , \vphib) &=& \int [\extd g_i]^{12} \vphi( g_{1} , g_{2} , g_{3} , g_{4}) \vphib( g_{1} , g_{2} , g_{3} , g_{5}) \vphi( g_{8} , g_{7} , g_{6} , g_{5}) \\ 
&& \vphib( g_{8} , g_{9} , g_{10} , g_{11}) \vphi( g_{12} , g_{9} , g_{10} , g_{11}) \vphib( g_{12} , g_{7} , g_{6} , g_{4})\,.
\ees

It is an invariant, represented by the $4$-colored graph of Figure \ref{int6}.
It is moreover melonic, and its external legs can be paired so as to form the vacuum melopole shown in Figure \ref{melop6}. This melopole strictly contains four
non-empty melopoles: $S_1 = \{ l_1\}$, $S_3 = \{ l_3\}$, $S_{12} = \{ l_1 , l_2\}$, $S_{23} = \{ l_2 , l_3 \}$. On the other hand, $\{ l_1 , l_3 \}$ and $\{ l_2 \}$ are not melopoles. 

\begin{figure}[h]
  \centering
  \subfloat[Interaction]{\label{int6}\includegraphics[scale=0.5]{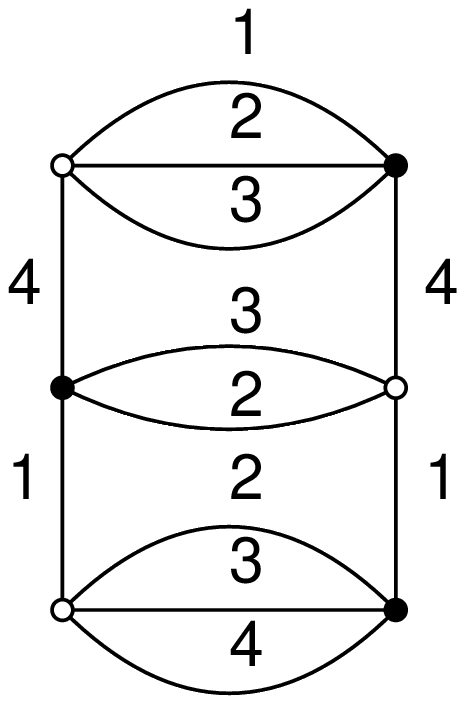}}                
  \subfloat[Vacuum melopole]
{\label{melop6}\includegraphics[scale=0.5]{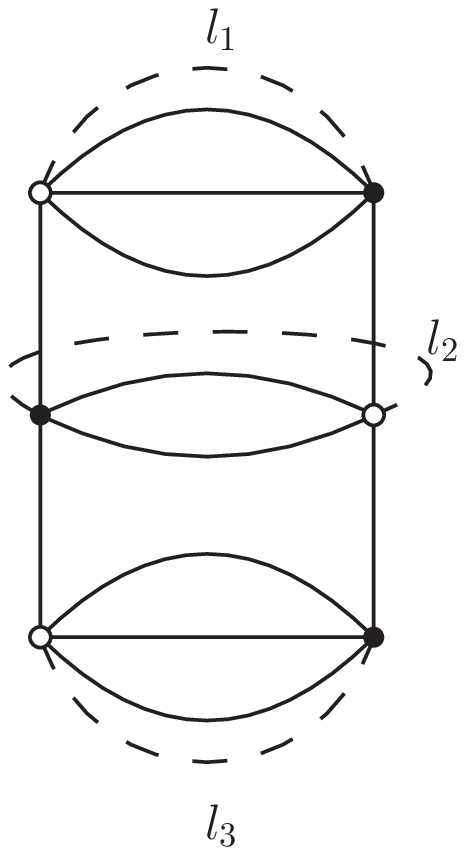}}
  \caption{Bubble interaction with color labels, and unique vacuum melopole that can be obtained from it.}
\end{figure}

We can construct $16$ meloforests out of these melopoles. Half of them, hence 8 do not contain the full graph $S_{123}$.
They are listed below according to the number of subgraphs:
\begin{itemize}
\item the empty forest $\emptyset$;
\item 4 forests with $1$ subgraph: $\{ S_{1} \}$, $\{ S_{3} \}$, $\{ S_{12} \}$, $\{ S_{23} \}$;
\item 3 forests with $2$ subgraphs: $\{ S_{1} , S_{12} \}$, $\{ S_{3} , S_{23} \}$, $\{ S_{1} , S_{3} \}$.
\end{itemize} 
The other half is simply obtained by adding $S_{123}$ to all of these forests. 

\
The melordering generates three kinds of counter-terms: vacuum terms, $2$-point function terms, and two types of $4$-point function terms. We call $b_2$ the $2$-point effective bubble, $b_{4,1}$ and $b_{4,4}$ the two $4$-point effective bubbles, as shown in Figure \ref{eff6}. The melordered interaction will take the form
\beq
\Omega_\rho (S) = S + t_{4,1} (\rho) \, b_{4,1} + t_{4,4} (\rho) \, b_{4,4} + t_{2} (\rho) \, b_{2} + t_\emptyset (\rho) \,,
\eeq
where $t$ are sums of products of coefficients $\nu$. To determine them, we need to analyze the contraction operators they correspond to.

\begin{figure}
\begin{center}
\includegraphics[scale=0.5]{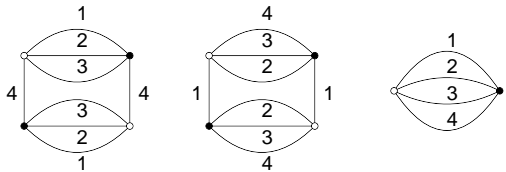}
\caption{Effective interactions generated by melordering. From left to right: $b_{4,4}$, $b_{4,1}$ and $b_2$.}
\label{eff6}
\end{center}
\end{figure}

\
The $4$-point interaction terms are simple, since they are generated by forests $\{ S_1 \}$ and $\{ S_3 \}$. $t_{4,1} (\rho)$ and $t_{4,4} (\rho)$ are therefore both given by the evaluation of $\nu_\rho$ on a single-line melopole (noted $\nu_\rho (1)$) 
\beq
t_{4,1} (\rho) = t_{4,4} (\rho) = - \nu_\rho (1) = - \int_{M^{-2 \rho}}^{+ \infty} \extd \alpha e^{-\alpha m^2} \int \extd \lambda \left( K_\alpha (\lambda) \right)^{3}\,,
\eeq
which is proportional to $\rho$ in the large $\rho$ limit. As expected, these are log-divergent terms.

\
The $2$-point interaction is generated by $\{ S_{12} \}$, $\{ S_{23} \}$, $\{ S_{1} , S_{12} \}$, $\{ S_{3} , S_{23} \}$ and
$\{ S_{1} , S_{3} \}$. $\{ S_{12} \}$ and $\{ S_{23} \}$ contribute with a minus sign, and with an absolute value given by the evaluation of a two-line melopole, that is
\beq
- \nu_\rho (2) = \int_{M^{-2 \rho}}^{+ \infty} \extd \alpha_1 e^{-\alpha_1 m^2} \int_{M^{-2 \rho}}^{+ \infty} \extd \alpha_2 e^{-\alpha_2 m^2}
\int \extd \lambda_1 \int \extd \lambda_2 \left( K_{\alpha_1} (\lambda_1) \right)^{2} \left( K_{\alpha_2} (\lambda_2) \right)^{3} K_{\alpha_1 + \alpha_2} (\lambda_1 + \lambda_2)
\eeq
each. The three other terms come with a plus sign, and factorize as the square of a single line melopole. Therefore:
\beq
t_2 (\rho) = - 2 \nu_\rho (2) + 3 ( \nu_\rho (1) )^2 \,.
\eeq

\
All the other forests contribute to the vacuum counter-term. There are eight of them. It is then easy to see that:
\beq
t_\emptyset (\rho) = - \mu_\rho (3) + 2 \nu_\rho (1) \mu_\rho (2)  + 2 \nu_\rho (2) \mu_\rho (1)  - 3 ( \nu_\rho (1) )^2 \mu_\rho (1)  \,.
\eeq
where the $\nu_\rho$ are the logarithmically divergent previous integrals and the 
$\mu_\rho(1,2,3)$ are full vacuum melopoles amplitudes (with respectively 1 2 and 3 lines), each diverging 
linearly in $M^{ \rho}$. One can check that the integral over the Gaussian measure
of the full melordered combination is then finite as all divergent contributions cancel out.   

 %!TEX root = ../bp_phase.tex

\end{document}